\begin{document}





\title{Phased Array-Based Sub-Nyquist Sampling for
Joint Wideband Spectrum Sensing and Direction-of-Arrival
Estimation}

\author{Feiyu Wang, Jun Fang, Huiping Duan, and Hongbin Li,~\IEEEmembership{Senior
Member,~IEEE}
\thanks{Feiyu Wang, and Jun Fang are with the National Key Laboratory
of Science and Technology on Communications, University of
Electronic Science and Technology of China, Chengdu 611731, China,
Email: JunFang@uestc.edu.cn}
\thanks{Huiping Duan is with the School of Electronic Engineering,
University of Electronic Science and Technology of China, Chengdu
611731, China, Email: huipingduan@uestc.edu.cn}
\thanks{Hongbin Li is
with the Department of Electrical and Computer Engineering,
Stevens Institute of Technology, Hoboken, NJ 07030, USA, E-mail:
Hongbin.Li@stevens.edu}
\thanks{This work was supported in part by the National Science
Foundation of China under Grant 61522104, and the National Science
Foundation under Grant ECCS-1408182 and Grant ECCS-1609393.}}

\maketitle


\begin{abstract}
In this paper, we study the problem of joint wideband spectrum
sensing and direction-of-arrival (DoA) estimation in a sub-Nyquist
sampling framework. Specifically, considering a scenario where a
few uncorrelated narrowband signals spread over a wide (say,
several GHz) frequency band, our objective is to estimate the
carrier frequencies and the DoAs associated with the narrowband
sources, as well as reconstruct the power spectra of these
narrowband signals. To overcome the sampling rate bottleneck for
wideband spectrum sensing, we propose a new phased-array based
sub-Nyquist sampling architecture with variable time delays, where
a uniform linear array (ULA) is employed and the received signal
at each antenna is delayed by a variable amount of time and then
sampled by a synchronized low-rate analog-digital converter (ADC).
Based on the collected sub-Nyquist samples, we calculate a set of
cross-correlation matrices with different time lags, and develop a
CANDECOMP/PARAFAC (CP) decomposition-based method for joint DoA,
carrier frequency and power spectrum recovery. Perfect recovery
conditions for the associated parameters and the power spectrum
are analyzed. Our analysis reveals that our proposed method does
not require to place any sparse constraint on the wideband
spectrum, only needs the sampling rate to be greater than the
bandwidth of the narrowband source signal with the largest
bandwidth among all sources. Simulation results show that our
proposed method can achieve an estimation accuracy close to the
associated Cram\'{e}r-Rao bounds (CRBs) using only a small number
of data samples.
\end{abstract}





\begin{keywords}
Joint wideband spectrum sensing and direction-of-arrival (DoA)
estimation; compressed sensing; CANDECOMP/PARAFAC (CP)
decomposition.
\end{keywords}

\section{Introduction}
Wideband spectrum sensing, which aims to identify the frequency
locations of a few narrowband transmissions that spread over a
wide frequency band, has been of a growing interest in signal
processing and cognitive radio communications
\cite{AxellLeus12,SunNallanathan13}. To perform wideband spectrum
sensing, a conventional receiver requires to sample the received
signal at the Nyquist rate, which may be infeasible if the
spectrum under monitoring is very wide, say, reaches several GHz.
Also, a high sampling rate results in a large amount of data which
place a heavy burden on subsequent storage and processing. To
alleviate the sampling rate requirement, a variety of sub-Nyquist
sampling schemes, e.g.
\cite{MishaliEldar09,MishaliEldar10,MishaliEldar11,WakinBecker12},
were developed. The rationale behind such schemes is to exploit
the inherent sparsity in the frequency domain and formulate
wideband spectrum sensing as a sparse signal recovery problem
which, according to the compressed sensing theory
\cite{CandesRomberg06,Donoho06}, can perfectly recover the signal
of the entire frequency band based on compressed measurements or
sub-Nyquist samples. Furthermore, in
\cite{ArianandaLeus12,YenTsai13,CohenEldar14}, it was shown that
it is even possible to perfectly reconstruct the power spectrum
without placing any sparse constraint on the wideband spectrum
under monitoring.




In some applications such as electronic warfare, one need not only
conduct wideband spectrum sensing, but also identify the carrier
frequencies and directions-of-arrival (DoAs) associated with the
narrowband signals that live within the wide frequency band
\cite{ZoltowskiMathews94}. Besides, in massive MIMO or millimeter
wave systems where signals are transmitted via beamforming
techniques, the DoA information would allow a cognitive radio to
more efficiently exploit the vacant bands \cite{SteinYair15}. In
\cite{LemmaVeen98,LemmaVeen03}, ESPRIT-based methods were proposed
for joint carrier frequency and DoA estimation. These methods,
however, require the signal to be sampled at the Nyquist rate.
Recently, with the advent of compressed sensing theories, the
sparsity inherent in the spectral and spatial domains was utilized
to devise sub-Nyquist sampling-based algorithms for joint wideband
spectrum sensing and DoA estimation. Specifically, in
\cite{ArianandaLeus13}, a compressed sensing method was developed
in a phased array framework, where a multicoset sampling scheme is
executed at each antenna to collect non-uniform samples. In
practice, the multicoset sampling may be implemented using
multiple channels, with each channel delayed by a different time
offset and then sampled by a low-rate analog-digital converter
(ADC). Since the multicoset sampling has to be performed at each
antenna, the scheme \cite{ArianandaLeus13} involves a high
hardware complexity. In \cite{KumarRazul14,KumarRazul15}, a
simplified sub-Nyquist receiver architecture was proposed, in
which each antenna output is connected with only two channels,
i.e. a direct path and a delayed path. An ESPRIT-based algorithm
was then developed for joint DoA, carrier frequency, and signal
reconstruction. In addition to the above time delay-based
sub-Nyquist receiver architectures, an alternative sub-Nyquist
sampling approach, referred to as phased array-based modulated
wideband converter (MWC), was proposed in
\cite{SteinYair15,IoushuaYair17} for carrier and DoA estimation.
The receiver utilizes an L-shaped array, and all sensors have the
same sampling pattern implementing a single channel of the MWC.
Perfect recovery conditions were analyzed, and reconstruction
algorithms based on compressed sensing techniques were developed
in \cite{IoushuaYair17}.







In this paper, we propose a new sub-Nyquist receiver architecture,
referred to as the phased-array based sub-Nyquist sampling
architecture with variable time delays, for joint wideband
spectrum sensing and DoA estimation. Similar to
\cite{ArianandaLeus13,KumarRazul14,KumarRazul15}, the proposed
receiver architecture employs a uniform linear array. The received
signal at each antenna is delayed by a pre-specified time shift
and then sampled at a sub-Nyquist sampling rate. Compared with
existing sub-Nyquist receiver architectures, our proposed
sub-Nyquist scheme is simpler and easier to implement: it requires
only one ADC for each antenna output, thus leading to a lower
hardware complexity. Meanwhile, in our proposed architecture, the
time delays for different antennas can be arbitrary as long as
they satisfy a mild condition, which relaxes the requirement on
the accuracy of time delay lines. From the collected sub-Nyquist
samples, we calculate a set of cross-correlation matrices with
different time lags, based on which a third-order tensor that
admits a CANDECOMP/PARAFAC (CP) decomposition can be constructed.
We show that the DoAs and the carrier frequencies, along with the
power spectra associated with the sources, can be recovered from
the factor matrices. The perfect recovery condition is analyzed.
Our analysis shows that, to perfectly recover the power spectrum
of the wide frequency band and the associated parameters, we only
need the sampling rate to be greater than the bandwidth of the
narrowband source signal with the largest bandwidth among all
sources. In addition, our proposed method does not need to impose
any sparse constraint on the wideband spectrum. We also derive the
Cram\'{e}r-Rao bound (CRB) results for our estimation problem.
Simulation results show that our proposed method, with only a
small number of data samples, can achieve an estimation accuracy
close to the associated CRBs.




We notice that a CP decomposition-based approach was proposed in
\cite{SteinYair15} for joint DoA and carrier frequency estimation.
Different from our work, the construction of the tensor in
\cite{SteinYair15} has to rely on an L-shaped array and exploits
the cross-correlations between the two mutually perpendicular
sub-arrays. In addition, the PARAFAC analysis in
\cite{SteinYair15} can only help extract the DoA and carrier
frequency information, while in our proposed method, the DoA,
carrier frequency, and power spectrum associated with each source
can be simultaneously recovered from the CP decomposition.

The rest of the paper is organized as follows. In Section
\ref{sec:preliminaries}, we provide notations and basics on the CP
decomposition. The signal model and related assumptions are
discussed in Section \ref{sec:signal-model}. In Section
\ref{sec:architecture}, we propose a new phase-array based
sub-Nyquist receiver architecture. A CP decomposition-based method
for joint wideband spectrum sensing and DoA estimation is
developed in Section \ref{sec:proposed-method}. The uniqueness of
the CP decomposition is discussed in Section
\ref{sec:CP-uniqueness}, and the CRB analysis is conducted in
Section \ref{sec:CRB}. Simulation results are provided in Section
\ref{sec:experiments}, followed by concluding remarks in Section
\ref{sec:conclusion}.

\section{Preliminaries} \label{sec:preliminaries} To make the
paper self-contained, we provide a brief review on tensors and the
CP decomposition. More details regarding the notations and basics
on tensors can be found in \cite{KoldaBader09}. Simply speaking, a
tensor is a generalization of a matrix to higher-order dimensions,
also known as ways or modes. Vectors and matrices can be viewed as
special cases of tensors with one and two modes, respectively.
Throughout this paper, we use symbols $\otimes$ , $\circ$ , and
$\odot$ to denote the Kronecker, outer, and Khatri-Rao product,
respectively.

Let $\boldsymbol{\mathcal{X}}\in\mathbb{C}^{I_1\times
I_2\times\cdots\times I_N}$ denote an $N$th-order tensor with its
$(i_1,\ldots,i_N)$th entry denoted by $\mathcal{X}_{i_1\cdots
i_N}$. Here the order $N$ of a tensor is the number of dimensions.
Fibers are higher-order analogues of matrix rows and columns. The
mode-$n$ fibers of $\boldsymbol{\mathcal{X}}$ are
$I_n$-dimensional vectors obtained by fixing every index but
$i_n$. Slices are two-dimensional sections of a tensor, defined by
fixing all but two indices. Unfolding or matricization is an
operation that turns a tensor into a matrix. The mode-$n$
unfolding of a tensor $\boldsymbol{\mathcal{X}}$, denoted as
$\boldsymbol{X}_{(n)}$, arranges the mode-$n$ fibers to be the
columns of the resulting matrix. The CP decomposition decomposes a
tensor into a sum of rank-one component tensors (see Fig.
\ref{fig:CP}), i.e.
\begin{align}
\boldsymbol{\mathcal{X}}=
\sum\limits_{r=1}^{R}\lambda_r\boldsymbol{a}_r^{(1)}\circ\boldsymbol{a}_r^{(2)}\circ\cdots\circ\boldsymbol{a}_r^{(N)}
\end{align}
where $\boldsymbol{a}_r^{(n)}\in\mathbb{C}^{I_n}$, the minimum
achievable $R$ is referred to as the rank of the tensor, and
$\boldsymbol{A}^{(n)}\triangleq
[\boldsymbol{a}_{1}^{(n)}\phantom{0}\ldots\phantom{0}\boldsymbol{a}_{R}^{(n)}]\in\mathbb{C}^{I_n\times
R}$ denotes the factor matrix along the $n$-th mode. Elementwise,
we have
\begin{align}
\mathcal{X}_{i_1 i_2\cdots i_N}=\sum\limits_{r=1}^{R}\lambda_r
a_{i_1 r}^{(1)}a_{i_2 r}^{(2)}\cdots a_{i_N r}^{(N)}
\end{align}
The mode-$n$ unfolding of $\boldsymbol{\mathcal{X}}$ can be
expressed as
\begin{align}
\boldsymbol{X}_{(n)}=\boldsymbol{A}^{(n)}\boldsymbol{\Lambda}\left(\boldsymbol{A}^{(N)}
\odot\cdots\boldsymbol{A}^{(n+1)}\odot\boldsymbol{A}^{(n-1)}\odot\cdots\boldsymbol{A}^{(1)}\right)^T
\end{align}
where
$\boldsymbol{\Lambda}\triangleq\text{diag}(\lambda_1,\ldots,\lambda_R)$.

\begin{figure}[!t]
\centering
\includegraphics[width=9cm]{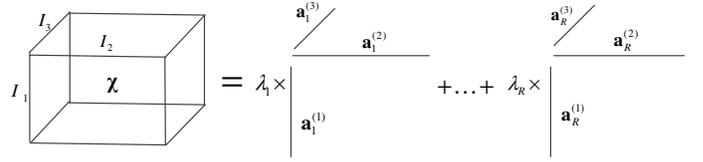}
\caption{Schematic of CP decomposition.} \label{fig:CP}
\end{figure}

\section{Signal Model} \label{sec:signal-model}
Consider a scenario in which $K$ uncorrelated, wide-sense
stationary, and far-field narrowband signals spreading over a wide
frequency band impinge on a wideband uniform linear array (ULA)
with $N$ receiver antennas, where we assume $N>K$. Let $s(t)$
denote the combination of the $K$ narrowband signals in the time
domain. $s(t)$ can be expressed as
\begin{align}
s(t)=\sum_{k=1}^K s_k(t) e^{j \omega_{k} t}
\end{align}
where $s_k(t)$ and $\omega_{k}\in\mathbb{R}^{+}$ denote the
complex baseband signal and the carrier frequency (in radians per
second) of the $k$th source signal, respectively. Each source
signal $s_k(t)$ is associated with an unknown azimuth DoA
$\theta_k\in[0,\pi)$. We have the following assumptions regarding
the source signals:
\begin{itemize}
\item[A1] The $K$ source signals $\{s_k(t)\}$ are assumed to be
mutually uncorrelated, wide-sense stationary, and bandlimited to
$[-B/2,B/2]$, i.e. $B_k\leq B,\forall k$, where $B_k$ denotes the
bandwidth of the $k$th source signal.
\item[A2] Sources either have distinct carrier frequencies $\{\omega_k\}$ or
distinct DoAs $\{\theta_k\}$, i.e. for any two source signals, we
have $(\theta_i,\omega_i)\neq (\theta_j,\omega_j), \forall i\neq
j$.
\item[A3] The multi-band signal $s(t)$
is bandlimited to $\mathcal{F}=[0,f_{\text{nyq}}]$, and we assume
$f_{\text{nyq}} \gg B$.
\end{itemize}

Assumption A2 is assumed to make signals distinguished from one
another. Note that this assumption is less restrictive than the
one made in other works, e.g. \cite{KumarRazul14,IoushuaYair17},
which, in order to remove the source ambiguity, require the
quantity $\omega_k\cos(\theta_k)$ to be mutually different for
different signals, i.e.
\begin{align}
\omega_i\cos(\theta_i)\neq \omega_j\cos(\theta_j) \quad \forall
i\neq j
\end{align}





After collecting the received signal at the array, our objective
is to jointly estimate the DoAs $\{\theta_k\}$, the carrier
frequencies $\{\omega_{k}\}$, as well as the power spectra
associated with the $K$ source signals. To accomplish this task,
we, in the following, propose a new phased-array based sub-Nyquist
receiver architecture. 



\begin{figure}[t]
   \centering
   \includegraphics [width=200pt]{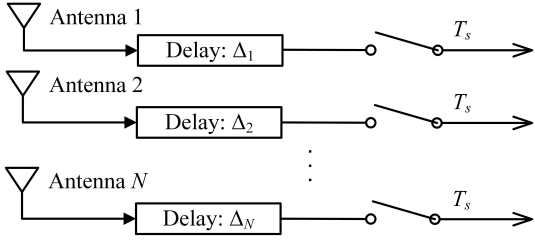}
   \caption{Proposed Phased-Array based Sub-nyquist Sampling Architecture with variable Time delays (PASSAT).}
   \label{fig:proposed-architecture}
\end{figure}

\section{Proposed Sub-Nyquist Receiver Architecture} \label{sec:architecture}
\subsection{Proposed Receiver Architecture}
In our receiver architecture, the received signal at each antenna
is delayed by a pre-specified factor $\Delta_n$ and then sampled
by a synchronous ADC with a sampling rate of $f_s=1/T_s$, where
$f_s\ll f_{\text{nyq}}$. We have the following assumptions
regarding the delay factors and the sampling rate:
\begin{itemize}
\item[A4] The time delay factors $\{\Delta_n\}$ can take arbitrary values as
long as the following condition holds valid
\begin{align}
(\Delta_{n+2}-2\Delta_{n+1}+\Delta_{n}) f_{\text{nyq}} < 1
\label{time-delay-assumption}
\end{align}
for some $n\in \{1,\ldots,N-2\}$.
\item[A5] The sampling rate $f_s$ is no less than the bandwidth of
the narrowband source signal which has the largest bandwidth among
all sources, i.e. $f_s\geq B$.
\end{itemize}

As will be shown later in our paper, Assumption A4 is essential to
identify the unknown carrier frequencies. Also, in practice, the
time delay factors $\{\Delta_n\}$ can be chosen to be of the same
order of magnitude as the Nyquist sampling interval such that the
narrowband approximation in (\ref{signal-model}) holds valid. The
proposed receiver architecture, termed as the Phased-Array based
Sub-Nyquist Sampling Architecture with variable Time delays
(PASSAT), is illustrated in Fig. \ref{fig:proposed-architecture}.
The analog signal observed by the $n$th antenna can be expressed
as
\begin{align}
x_n(t) = & \sum_{k=1}^K s_k(t-(n-1)\tau_k-\Delta_n) \nonumber \\
& \quad \ \times e^{j \omega_{k} (t-(n-1)\tau_k-\Delta_n)} + w_n(t) \nonumber \\
\approx & \sum_{k=1}^K s_k(t) e^{j \omega_{k}
(t-(n-1)\tau_k-\Delta_n)} + w_n(t) \label{signal-model}
\end{align}
where the approximation is due to the narrowband assumption,
$w_n(t)$ represents the additive white Gaussian noise with zero
mean and variance $\sigma^2$, and $\tau_k$ denotes the delay
between two adjacent sensors for a plane wave arriving in the
direction $\theta_k$ and is given by
\begin{align}
\tau_k=\frac{d\cos{\theta_k}}{C} \label{tau}
\end{align}
Here $d$ denotes the distance between two adjacent antennas and we
assume
\begin{itemize}
\item[A6] The distance between two adjacent antennas $d$ satisfies $d<C/f_{\text{nyq}}$,
where $C$ is the speed of light.
\end{itemize}

We will show later that this assumption is essential for the
recovery of the DoAs.

In practice, only the real part of $x_n(t)$ is observed and
sampled. Nevertheless, the corresponding imaginary part
$\Im[x_n(t)]$ can be retrieved from the real part $\Re[x_n(t)]$ by
passing the signal through a finite impulse response (FIR) Hilbert
transformer. The complex analytic signal can also be roughly
approximated by computing the discrete Fourier transform (DFT) of
the output of each antenna and throwing away the negative
frequency portion of the spectrum \cite{ZoltowskiMathews94}.


\begin{figure}[!t]
 \centering
\subfigure[]{\includegraphics[width=200pt]{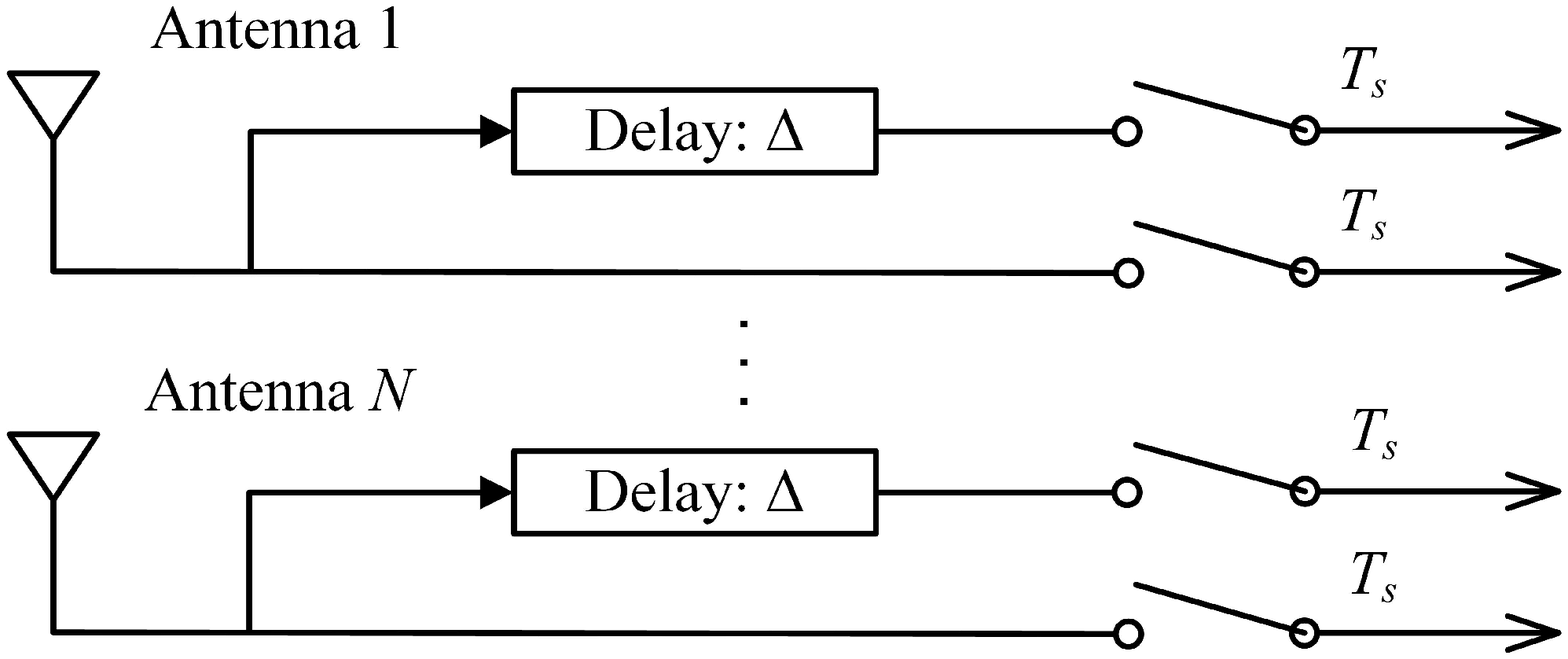}} \\
\subfigure[]{\includegraphics[width=200pt]{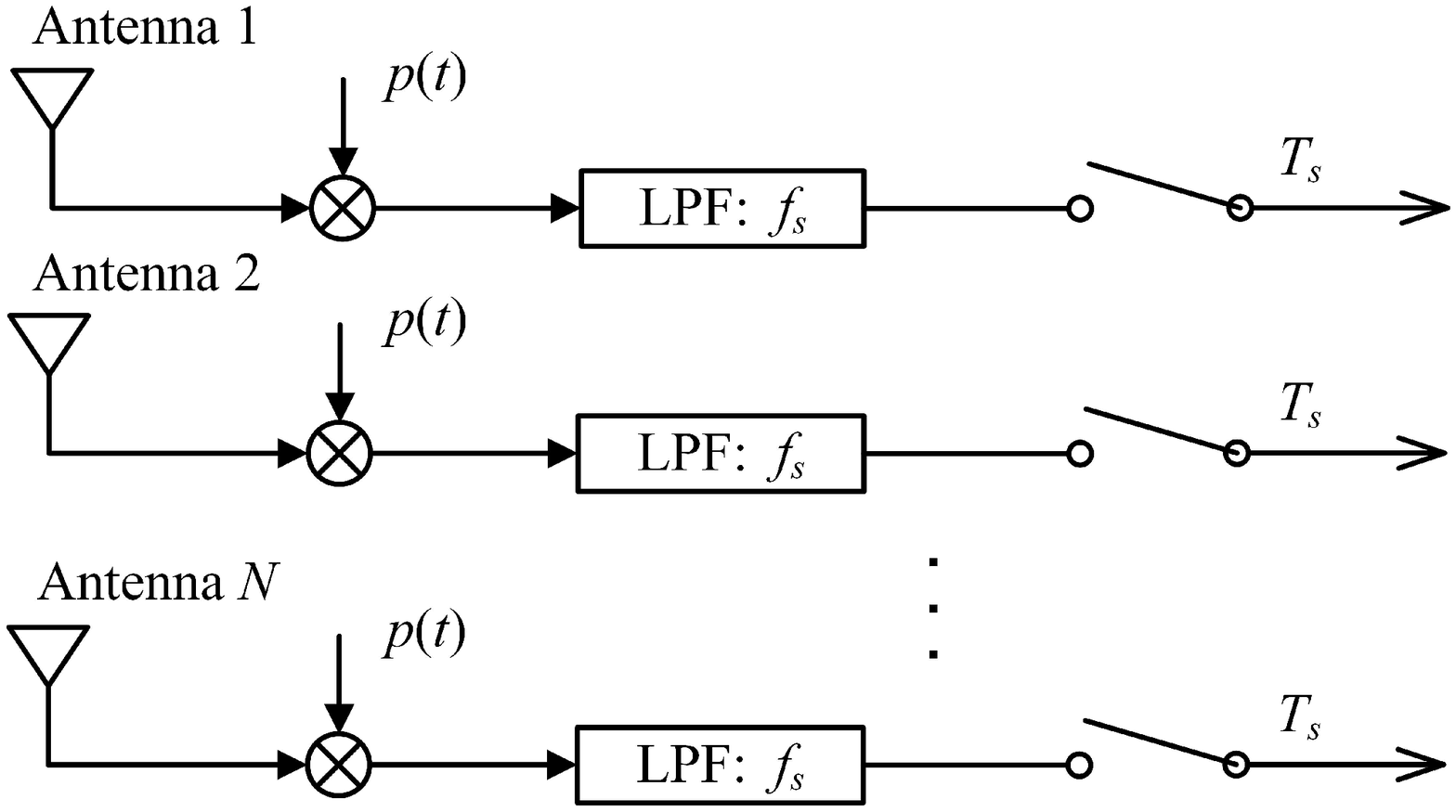}}
  \caption{Existing sub-Nyquist receiver architectures.
  (a) Sub-Nyquist sampling architecture proposed in \cite{KumarRazul14}.
  (b) Sub-Nyquist sampling architecture proposed in \cite{IoushuaYair17}.}
   \label{fig:existing-architecture}
\end{figure}

\subsection{Relation to and Distinction from Existing Architectures}
We notice that a time delay-based sub-Nyquist architecture was
also introduced in \cite{ArianandaLeus13,KumarRazul14,
KumarRazul15}. Nevertheless, there are two key distinctions
between our architecture and theirs. Firstly, our architecture has
a simpler structure with only $N$ delay channels, whereas the
architecture proposed in \cite{KumarRazul14} (see Fig.
\ref{fig:existing-architecture}(a)) requires $2N$ channels in
total, in which each antenna output passes through two channels,
namely, a direct path and a delayed path. As a consequence, the
number of required ADCs for the architecture \cite{KumarRazul14}
is twice the number of ADCs for our architecture. In
\cite{KumarRazul15}, a modified architecture was proposed based on
\cite{KumarRazul14}. It, however, still requires $2N$ channels,
with an $N$-channel delay network added to the first antenna.
Secondly, for our proposed architecture, the time delays can take
arbitrary values as long as the mild condition
(\ref{time-delay-assumption}) is satisfied. In contrast, for other
architectures, e.g. \cite{ArianandaLeus13,KumarRazul14,
KumarRazul15}, a precise time control is required such that the
time delays across different channels are strictly identical
\cite{KumarRazul14, KumarRazul15}, or the time delays must be
integer multiples of the Nyquist sampling interval
\cite{ArianandaLeus13}. Due to the inaccuracy caused by the time
shift elements, maintaining accurate time delays on the order of
the Nyquist sampling interval is difficult. The inaccuracy in
these delays will impair the recovery performance. Our
architecture is free from this issue because it allows more
flexible time delays and we can use the actual time delays
measured in practice for our proposed recovery algorithm.




In \cite{SteinYair15,IoushuaYair17}, a phased-array MWC-based
sub-Nyquist sampling architecture (see Fig.
\ref{fig:existing-architecture}(b)) was proposed for joint
wideband spectrum sensing and DoA estimation, in which an L-shaped
array composed of $2N+1$ sensors is adopted, and the output of
each sensor is multiplied by a same periodic pseudo-random
sequence, low-pass filtered and then sampled at a low rate.
Compared to the phased-array MWC-based sub-Nyquist sampling
architecture, our proposed delay-based scheme is much simpler to
implement. In \cite{LavrenkoRomer16}, it is argued that the
delay-based architectures suffer two major disadvantages which
include the need for high-precision delay lines as well as
specialized ADCs with high analog bandwidth. Nevertheless, as
discussed above, our proposed architecture, different from other
delay-based schemes \cite{ArianandaLeus13,KumarRazul14,
KumarRazul15}, has a relaxed requirement on the precision of delay
lines. Regarding the latter issue, it is known that there is an
inherent bandwidth limitation for practical ADCs, termed analog
(full-power) bandwidth, which determines the highest frequency
that can be handled by the device. In spite of that, we notice
that the inherent bandwidth of some inexpensive, low-end
commercial ADCs such as ADC12DC105 can reach up to 1GHz, while
some high-end ADCs with affordable prices, such as ADC12D500, have
an inherent bandwidth up to 2.7GHz, which may accommodate most
wideband spectrum sensing applications.

\section{Proposed CP Decomposition-Based Method} \label{sec:proposed-method}
Let $\delta(\cdot)$ denote the indicator function defined as
\begin{align}
\delta(x) = \left\{ \begin{array}{ll}
1, & \textrm{$x=0$}\\
0, & \textrm{$x\neq0$}
\end{array}. \right.
\end{align}
We first calculate the cross-correlation between two sensor
outputs $x_{m}(t_1)$ and $x_{n}(t_2)$. Recalling Assumption A1, we
have
\begin{align}
R_{m n}^{x} (t_1,t_2) &= \mathbb{E} \left[ x_{m}(t_1)x_{n}^{*}(t_2)    \right]  \nonumber \\
&=\sum_{k=1}^K R_{k}^{s} (t_1,t_2) a_{m k} a_{n k}^{*} +
R_{mn}^{w} (t_1,t_2) \label{cross-correlation}
\end{align}
where
\begin{align}
R_{k}^{s} (t_1,t_2)\triangleq\mathbb{E}\left[s_k(t_1) e^{j
\omega_{k} t_1} s_k^{*}(t_2) e^{-j \omega_{k} t_2}\right]
\label{source-correlation}
\end{align}
denotes the autocorrelation of the $k$-th modulated source signal,
\begin{align}
R_{mn}^{w} (t_1,t_2)\triangleq\mathbb{E}\left[w_m(t_1)w_n^{*}(t_2)
\right] = \sigma^2\delta(m-n)\delta(t_1-t_2)
\end{align}
represents the autocorrelation of the additive noise and
\begin{align}
a_{n k}\triangleq e^{-j((n-1) \tau_k \omega_k + \Delta_n\omega_k)}
\label{a-definition}
\end{align}

Since the source signals are wide-sense stationary, the
autocorrelation $R_{k}^{s} (t_1,t_2)$ depends only on the time
difference $t_1-t_2$. As a result, the cross-correlation of the
sensor outputs $R_{m n}^{x} (t_1,t_2)$ depends on the time
difference $t_1-t_2$ as well. Let $T_s$ denote the sampling
interval of the ADCs. The time difference has to be an integer
multiple of the sampling interval, i.e. $t_1-t_2=l T_s$ for
$l=-L,\ldots,L$. For notational convenience, we define
\begin{align}
r_{m,n}^{x}(l)\triangleq & R_{m n}^{x} (t+l T_s,t) \nonumber\\
r_k^{s}(l)\triangleq & R_{k}^{s} (t+l T_s,t) \nonumber\\
r_{m,n}^{w}(l) \triangleq & R_{mn}^{w} (t+l T_s,t) \nonumber
\end{align}
We can therefore express (\ref{cross-correlation}) as a
discrete-time form:
\begin{align}
r_{m,n}^{x}(l)=\sum_{k=1}^K r_k^{s}(l) a_{m k} a_{n k}^{*}
+r_{m,n}^{w}(l)
\end{align}
for $l=-L,\ldots,L$ and $m,n=1,\ldots,N$.

Our objective is to recover the DoAs $\{\theta_k\}$, the carrier
frequencies $\{\omega_{k}\}$, as well as the power spectra
associated with the $K$ source signals based on the second-order
statistics $\{r_{m,n}^{x}(l)\}$. For each time lag $l$, we can
construct a correlation matrix $\boldsymbol{R}^{x}(l)$ with its
$(m,n)$th entry given by $r_{m,n}^{x}(l)$. Also, it can be easily
verified that
\begin{align}
\boldsymbol{R}^{x}(l)=\sum_{k=1}^K
r_k^{s}(l)\boldsymbol{a}_{k}\boldsymbol{a}_{k}^H +
\boldsymbol{R}^{w}(l) \label{eqn1}
\end{align}
where $\boldsymbol{R}^{w}(l)$ denotes the cross-correlation matrix
of the additive noise with its $(m,n)$th entry given by
$r_{m,n}^{w}(l)$, and
\begin{align}
\boldsymbol{a}_{k}\triangleq [a_{1k}\phantom{0}a_{2
k}\phantom{0}\ldots\phantom{0}a_{Nk}]^T \label{ak-definition}
\end{align}
Since a set of cross-correlation matrices
$\{\boldsymbol{R}^{x}(l)\}_{l=-L}^L$ are available, we can
naturally express this set of correlation matrices by a
third-order tensor
$\boldsymbol{\mathcal{R}}^x\in\mathbb{C}^{(2L-1)\times N\times N}$
whose three modes respectively stand for the time lag $l$ and the
antenna indices, and its $(l,m,n)$-th entry given by
$r_{m,n}^{x}(l)$. Notice from (\ref{eqn1}) that each slice of the
tensor $\boldsymbol{\mathcal{R}}^x$, $\boldsymbol{R}^{x}(l)$, is a
weighted sum of a common set of rank-one outer products. The
tensor $\boldsymbol{\mathcal{R}}^x$ thus admits a CP decomposition
which decomposes a tensor into a sum of rank-one component
tensors, i.e.
\begin{align}
\boldsymbol{\mathcal{R}}^x = \sum_{k=1}^{K} \boldsymbol{r}_k \circ
\boldsymbol{a}_k \circ \boldsymbol{a}_k^{*} +
\boldsymbol{\mathcal{R}}^w
\end{align}
where $\circ$ denotes the outer product,
$\boldsymbol{\mathcal{R}}^w\in\mathbb{C}^{(2L-1)\times N\times N}$
with its $(l,m,n)$-th entry given by $r_{m,n}^{w}(l)$, and
\begin{align}
\boldsymbol{r}_k \triangleq
[r_{k}^{s}(-L)\phantom{0}\dots\phantom{0} r_{k}^{s}(L)]^T
\end{align}

Define
\begin{align}
\boldsymbol{R} & \triangleq \left[\boldsymbol{r}_1\phantom{0}\dots\phantom{0}\boldsymbol{r}_K\right] \\
\boldsymbol{A} & \triangleq
\left[\boldsymbol{a}_1\phantom{0}\dots\phantom{0}\boldsymbol{a}_K\right]
\label{A-definition}
\end{align}
The three matrices
$\{\boldsymbol{R},\boldsymbol{A},\boldsymbol{A}^{*}\}$ are
referred to as factor matrices associated with the noiseless
version of $\boldsymbol{\mathcal{R}}^x$. We see that the
information about the parameters $\{\theta_k,\omega_{k}\}$ as well
as the power spectra can be extracted from the factor matrices.
Motivated by this observation, we propose a two-stage method which
consists of a CP decomposition stage whose objective is to
estimate the factor matrices and a parameter estimation stage
whose objective is to jointly recover the DoAs, carrier
frequencies, and the power spectra of sources based on the
estimated factor matrices.



\subsection{CP Decomposition}
We first consider the scenario where the number of sources, $K$,
is known or estimated \emph{a priori}. Clearly, the CP
decomposition can be accomplished by solving the following
optimization problem
\begin{align}
\min_{\boldsymbol{\hat{R}},\boldsymbol{\hat{A}}} \quad & \left\|
\boldsymbol{\mathcal{R}}^x - \sum_{k=1}^{K} \boldsymbol{\hat{r}}_k
\circ \boldsymbol{\hat{a}}_k \circ \boldsymbol{\hat{a}}_k^{*}
\right\|_F^2
\end{align}
where
$\boldsymbol{\hat{R}}=[\boldsymbol{\hat{r}}_1\phantom{0}\dots\phantom{0}\boldsymbol{\hat{r}}_K]$,
$\boldsymbol{\hat{A}}=[\boldsymbol{\hat{a}}_1\phantom{0}\dots\phantom{0}\boldsymbol{\hat{a}}_K]$,
and $\| \cdot \|_F$ denotes the Frobenius norm. On the other hand,
note that the CP decomposition is unique under a mild condition.
Therefore we can use a new variable $\boldsymbol{\hat{b}}_k$ to
replace $\boldsymbol{\hat{a}}_k^{*}$, which leads to
\begin{align}
\min_{\boldsymbol{\hat{R}},\boldsymbol{\hat{A}},\boldsymbol{\hat{B}}}
\quad & \left\| \boldsymbol{\mathcal{R}}^x - \sum_{k=1}^{K}
\boldsymbol{\hat{r}}_k \circ \boldsymbol{\hat{a}}_k \circ
\boldsymbol{\hat{b}}_k \right\|_F^2
\end{align}
where
$\boldsymbol{\hat{B}}\triangleq[\boldsymbol{\hat{b}}_1\phantom{0}\dots\phantom{0}\boldsymbol{\hat{b}}_K]$.
The above optimization can be efficiently solved through an
alternating least squares (ALS) procedure which alternatively
updates one of the factor matrices to minimize the data fitting
error while keeping the other two factor matrices fixed:
\begin{align}
\widehat{\boldsymbol{R}}^{(t)} & = \arg\min_{\boldsymbol{R}}
\left\| (\boldsymbol{R}_{(1)}^x)^{T} -
(\widehat{\boldsymbol{B}}^{(t-1)}\odot\widehat{\boldsymbol{A}}^{(t-1)}) \boldsymbol{R}^T \right\|_F^2  \\
\widehat{\boldsymbol{A}}^{(t)} & = \arg\min_{\boldsymbol{A}}
\left\| (\boldsymbol{R}_{(2)}^x)^{T} -
(\widehat{\boldsymbol{B}}^{(t-1)}\odot\widehat{\boldsymbol{R}}^{(t)}) \boldsymbol{A}^T \right\|_F^2  \\
\widehat{\boldsymbol{B}}^{(t)} & = \arg\min_{\boldsymbol{B}}
\left\| (\boldsymbol{R}_{(3)}^x)^{T} -
(\widehat{\boldsymbol{A}}^{(t)}\odot\widehat{\boldsymbol{R}}^{(t)})
\boldsymbol{B}^T \right\|_F^2
\end{align}
where $\boldsymbol{R}^x_{(n)}$ denotes the mode-$n$ unfolding of
$\boldsymbol{\mathcal{R}}^x$.

If the knowledge of the number of sources, $K$, is unavailable,
more sophisticated CP decomposition techniques (e.g.
\cite{BazerqueMateos13,RaiWang14,ZhaoZhang15}) can be employed to
jointly estimate the model order and the factor matrices. The
basic idea is to use low rank-promoting priors or functions to
automatically determine the CP rank of the tensor. In
\cite{BazerqueMateos13}, when the CP rank, $K$, is unknown, the
following optimization was employed for CP decomposition
\begin{align}
\mathop {\min
}\limits_{\boldsymbol{\hat{R}},\boldsymbol{\hat{A}},\boldsymbol{\hat{B}}}\quad
&\left\| \boldsymbol{\mathcal{R}}^x -\boldsymbol{\mathcal X}
\right\|_F^2 + \mu
\left(\text{tr}({\boldsymbol{\hat{R}}}{{\boldsymbol{\hat{R}}}^H})
+ \text{tr}({\boldsymbol{\hat{A}}}{{\boldsymbol{\hat{A}}}^H}) +
\text{tr}({\boldsymbol{\hat{B}}}{{\boldsymbol{\hat{B}}}^H})\right) \nonumber\\
\text{s.t.}\quad &
\boldsymbol{\mathcal{X}}=\sum\limits_{k=1}^{\hat{K}}\boldsymbol{\hat{r}}_k\circ
\boldsymbol{\hat{a}}_k\circ\boldsymbol{\hat{b}}_k \label{opt-1}
\end{align}
where $\hat{K}\gg K$ denotes an overestimated CP rank, $\mu$ is a
regularization parameter to control the tradeoff between
low-rankness and the data fitting error, $\boldsymbol{\hat{R}}=
[\boldsymbol{\hat{r}}_1\phantom{0}\ldots\phantom{0}\boldsymbol{\hat{r}}_{\hat{K}}]$,
$\boldsymbol{\hat{A}}=
[\boldsymbol{\hat{a}}_1\phantom{0}\ldots\phantom{0}\boldsymbol{\hat{a}}_{\hat{K}}]$,
and $\boldsymbol{\hat{B}}=
[\boldsymbol{\hat{b}}_1\phantom{0}\ldots\phantom{0}\boldsymbol{\hat{b}}_{\hat{K}}]$.
The above optimization (\ref{opt-1}) can still be solved by an ALS
procedure as follows
\begin{align}
{{\boldsymbol{\hat{R}}}^{(t)}} &= \arg\min_{\boldsymbol{\hat{R}}}
\left\| {\left[ {\begin{array}{*{20}{c}}
(\boldsymbol{R}_{(1)}^x)^{T}\\
\boldsymbol{0}
\end{array}} \right] - \left[ {\begin{array}{*{20}{c}}
{{{\boldsymbol{\hat{B}}}^{(t-1)}} \odot {{\boldsymbol{\hat{A}}}^{(t-1)}}}\\
{\sqrt \mu  {\boldsymbol{I}}}
\end{array}} \right]{{\boldsymbol{\hat{R}}}^T}} \right\|_F^2 \nonumber\\
{{\boldsymbol{\hat{A}}}^{(t)}} &= \arg\min_{\boldsymbol{\hat{A}}}
\left\| {\left[ {\begin{array}{*{20}{c}}
(\boldsymbol{R}_{(2)}^x)^{T}\\
\boldsymbol{0}
\end{array}} \right] - \left[ {\begin{array}{*{20}{c}}
{{{\boldsymbol{\hat{B}}}^{(t-1)}} \odot {{\boldsymbol{\hat{R}}}^{(t)}}}\\
{\sqrt \mu  {\boldsymbol{I}}}
\end{array}} \right]{{\boldsymbol{\hat{A}}}^T}} \right\|_F^2 \nonumber\\
{{\boldsymbol{\hat{B}}}^{(t)}} &= \arg\min_{\boldsymbol{\hat{B}}}
\left\| {\left[ {\begin{array}{*{20}{c}}
(\boldsymbol{R}_{(3)}^x)^{T}\\
\boldsymbol{0}
\end{array}} \right] - \left[ {\begin{array}{*{20}{c}}
{{{\boldsymbol{\hat{A}}}^{(t)}} \odot {{\boldsymbol{\hat{R}}}^{(t)}}}\\
{\sqrt \mu  {\boldsymbol{I}}}
\end{array}} \right]{{\boldsymbol{\hat{B}}}^T}} \right\|_F^2 \nonumber
\end{align}
The true CP rank of the tensor, $K$, can be estimated by removing
those negligible rank-one tensor components after convergence.


\subsection{Joint DoA, Carrier Frequency and Power Spectrum Estimation}
We discuss how to jointly recover the DoAs, carrier frequencies,
and power spectra of sources based on the estimated factor
matrices. As shown in the next subsection, the CP decomposition is
unique up to scaling and permutation ambiguities under a mild
condition. More precisely, the estimated factor matrices and the
true factor matrices are related as
\begin{align}
\boldsymbol{\hat{R}}=&\boldsymbol{R}\boldsymbol{\Lambda}_1\boldsymbol{\Pi}+\boldsymbol{E}_1
\\
\boldsymbol{\hat{A}}=&\boldsymbol{A}\boldsymbol{\Lambda}_2\boldsymbol{\Pi}+\boldsymbol{E}_2
\\
\boldsymbol{\hat{B}}=&\boldsymbol{A}^{*}\boldsymbol{\Lambda}_3\boldsymbol{\Pi}+\boldsymbol{E}_3
\end{align}
where
$\{\boldsymbol{\Lambda}_1,\boldsymbol{\Lambda}_2,\boldsymbol{\Lambda}_3\}$
are unknown nonsingular diagonal matrices which satisfy
$\boldsymbol{\Lambda}_1\boldsymbol{\Lambda}_2\boldsymbol{\Lambda}_3=\boldsymbol{I}$;
$\boldsymbol{\Pi}$ is an unknown permutation matrix; and
$\boldsymbol{E}_1$, $\boldsymbol{E}_2$, and $\boldsymbol{E}_3$
denote the estimation errors associated with the three estimated
factor matrices, respectively. The permutation matrix
$\boldsymbol{\Pi}$ can be ignored as it is common to all three
factor matrices. Also, since we have prior knowledge that columns
of $\boldsymbol{A}/\sqrt{N}$ have unit norm, the amplitude
ambiguity can be estimated and removed, in which case we can write
\begin{align}
\boldsymbol{\hat{R}}=&\boldsymbol{R}\boldsymbol{\tilde{\Lambda}}_1+\boldsymbol{\tilde{E}}_1
\\
\boldsymbol{\hat{A}}=&\boldsymbol{A}\boldsymbol{\tilde{\Lambda}}_2+\boldsymbol{\tilde{E}}_2
\\
\boldsymbol{\hat{B}}=&\boldsymbol{A}^{*}\boldsymbol{\tilde{\Lambda}}_3+\boldsymbol{\tilde{E}}_3
\end{align}
where
$\boldsymbol{\tilde{\Lambda}}_1,\boldsymbol{\tilde{\Lambda}}_2,\boldsymbol{\tilde{\Lambda}}_3$
are unknown nonsingular diagonal matrices with their diagonal
elements lying on the unit circle.


Notice that the $k$th column of $\boldsymbol{A}$ is characterized
by the DoA and carrier frequency associated with the $k$th source.
We now discuss how to estimate $\{\omega_k\}$ and $\{\tau_k\}$
from the estimated factor matrix $\boldsymbol{\hat{A}}$. Note that
$\boldsymbol{\hat{B}}$ is also an estimate of $\boldsymbol{A}$.
Therefore either $\boldsymbol{\hat{A}}$ or $\boldsymbol{\hat{B}}$
can be used to estimate $\{\omega_k\}$ and $\{\tau_k\}$. Let
$\boldsymbol{\hat{a}}_k$ denote the $k$-th column of
$\boldsymbol{\hat{A}}$, and write
\begin{align}
\boldsymbol{\tilde{\Lambda}}_2=\text{diag}\{e^{-j\varphi_1},\dots,e^{-j\varphi_K}\}
\end{align}
where $\{\varphi_k\} \in [0,2\pi)$ are unknown parameters. To
simplify our exposition, we ignore the estimation errors
$\boldsymbol{\tilde{E}}_1$, $\boldsymbol{\tilde{E}}_2$, and
$\boldsymbol{\tilde{E}}_3$.


Write $z=r e^{j\varphi}$, and define
\begin{align}
\arg(z)\triangleq\text{mod}(\varphi,2\pi) \qquad \arg(z)\in
[0,2\pi)
\end{align}
where $\text{mod}(a,b)$ is a modulo operator which returns the
remainder of the Euclidean division of $a$ by $b$. Recalling
(\ref{a-definition}), we have
\begin{align}
\eta_{nk} & \triangleq \text{mod} \left(-\arg(\hat{a}_{nk}),2\pi\right) \nonumber \\
& =\text{mod} \left((n-1)\tau_k \omega_k +
\Delta_n\omega_k+\varphi_k,2\pi\right)
\end{align}
where $\hat{a}_{nk}$ denotes the $n$th entry of
$\boldsymbol{\hat{a}}_k$. Let
\begin{align}
\boldsymbol{\eta}_k\triangleq
\left[\eta_{1k}\phantom{0}\dots\phantom{0}\eta_{Nk}\right]^T
\nonumber
\end{align}
and let $\boldsymbol{D}_p$ denote a difference matrix defined as
\begin{displaymath}
\boldsymbol{D}_p \triangleq \left( \begin{array}{ccccc}
-1 & 1 & 0 & \ldots & 0\\
0 & -1 & 1 & \ldots & 0\\
\vdots & \vdots & \ddots & \ddots & \vdots\\
0 & 0 & \ldots & -1 & 1\\
\end{array} \right) \in \mathbb{R}^{(p-1)\times p}
\end{displaymath}
To recover $\omega_k$, we conduct a two-stage difference operation
as follows
\begin{align}
\boldsymbol{\beta}_{k}^{(1)}=& \text{mod} (\boldsymbol{D}_{N}
\boldsymbol{\eta}_k, 2\pi) \\
\boldsymbol{\beta}_{k}^{(2)}=& \text{mod}
(\boldsymbol{D}_{N-1}\boldsymbol{\beta}_{k}^{(1)}, 2\pi)
\end{align}
It can be easily verified that entries of
$\boldsymbol{\beta}_{k}^{(1)}$ and $\boldsymbol{\beta}_{k}^{(2)}$
are respectively given as
\begin{align}
\beta_{nk}^{(1)} & =\text{mod} \left(\tau_k \omega_k + (\Delta_{n+1}-\Delta_n)\omega_k,2\pi\right), \phantom{0}
n=1,\ldots,N-1 \label{eqn3}\\
\beta_{nk}^{(2)} & =\text{mod}
\left((\Delta_{n+2}-2\Delta_{n+1}+\Delta_n)\omega_k,2\pi\right),
\phantom{0} n=1,\ldots,N-2 \label{eqn2}
\end{align}

From (\ref{eqn2}), we can see that the information about the
carrier frequency $\omega_k$ is extracted after performing the
two-stage difference operation. By properly devising the time
delay factors $\{\Delta_n\}$, we can ensure that for some $n_0\in
\{1,\ldots,N\}$, the condition (\ref{time-delay-assumption}) holds
valid, i.e.
\begin{align}
(\Delta_{n_0+2}-2\Delta_{n_0+1}+\Delta_{n_0}) f_{\text{nyq}} < 1
\label{time-delay-condition}
\end{align}
The above condition implies
\begin{align}
(\Delta_{n_0+2}-2\Delta_{n_0+1}+\Delta_{n_0}) \omega_{\text{max}}
<2\pi
\end{align}
where
$\omega_{\text{max}}\triangleq\max\{\omega_1,\dots,\omega_K\}$.
Therefore $\omega_k$ can simply be estimated as
\begin{align}
\hat{\omega}_k = \frac {\beta_{n_0,k}^{(2)}}
{\Delta_{n_0+2}-2\Delta_{n_0+1}+\Delta_{n_0}}
\end{align}
In fact, for a careful selection of time delay factors
$\{\Delta_n\}$, the condition (\ref{time-delay-assumption}) (i.e.
(\ref{time-delay-condition})) may be satisfied for different
choices of $n$. As a result, we can obtain multiple estimates of
$\hat{\omega}_k$. To improve the estimation performance, a final
estimate of $\hat{\omega}_k$ can be chosen as the average of these
multiple estimates.

Under Assumption A6, that is, $d<C/f_{\text{nyq}}$, we have
$\tau_k\omega_{\text{max}}<2\pi$. Thus, substituting the estimated
$\hat{\omega}_k$ back into (\ref{eqn3}), $\tau_k$ can be obtained
as
\begin{align}
\hat{\tau}_k = \frac { \text{mod} \left( \beta_{nk}^{(1)}-
(\Delta_{n+1}-\Delta_n)\hat{\omega}_k,2\pi\right)}
{\hat{\omega}_k}
\end{align}
Note that for each $\beta_{nk}^{(1)}, n=1,\ldots,N-1$, we can
obtain an estimate of $\tau_k$. Therefore multiple estimates of
$\tau_k$ can be collected. Again, an average operation can be
conducted to yield a final estimate of $\tau_k$. Based on
$\hat{\tau}_k$, an estimate of the associated DoA $\theta_k$ can
be readily obtained from (\ref{tau}).




We now discuss how to recover the power spectra of the sources
$\{s_k(t)\}$. Let $\tilde{r}_{k}^{s}(\tau) \triangleq
R_{k}^{s}(t+\tau,t)$, where $\tau\in\mathbb{R}$ can be any real
value. The power spectrum of the $k$th source can thus be
expressed as the Fourier transform of $\tilde{r}_{k}^{s}(\tau)$,
i.e.
\begin{align}
\tilde{S}_{k}(\omega) = \int_{-\infty}^{+\infty}
\tilde{r}_{k}^{s}(\tau) e^{-j\omega\tau} \mathrm{d}\tau
\end{align}
Let $S_k(\omega)$ denote the discrete-time Fourier transform
(DTFT) of the autocorrelation sequence
$\{r_k^s(l)\}_{l=-\infty}^{+\infty}$, i.e.
\begin{align}
S_k(\omega)=\sum_{l=-\infty}^{\infty}r_k^s(l)e^{-j\omega l T_s}
\end{align}
According to the sampling theorem, $\tilde{S}_{k}(\omega)$ and
$S_{k}(\omega)$ are related as follows
\begin{align}
S_{k}(\omega) = \frac{1}{T_s} \sum_{n=-\infty}^{+\infty}
\tilde{S}_{k}\left(\omega+n\frac{2\pi}{T_s}\right)
\end{align}
Under Assumption A5, i.e. $f_s\geq B\geq B_k$, the power spectrum
$\tilde{S}_{k}(\omega)$ can be perfectly recovered by filtering
$S_{k}(\omega)$ with a bandpass filter, i.e.
\begin{align}
\tilde{S}_{k}(\omega) = \left\{ \begin{array}{ll}
T_s S_k(\omega), & \omega\in [\omega_k-\pi f_s,\omega_k+\pi f_s]\\
0, & \omega \notin [\omega_k-\pi f_s,\omega_k+\pi f_s]
\end{array} . \right.
\end{align}

Given the estimated factor matrix $\boldsymbol{\hat{R}}$, the DTFT
of the autocorrelation sequence $\{r_k^s(l)\}$ can be approximated
as
\begin{align}
\hat{S}_k(\omega)=\sum_{l=-L}^{L}\hat{r}_k^s(l)e^{-j\omega l T_s}
\end{align}
When $L$ is chosen to be sufficiently large, the estimation error
due to the time lag truncation is negligible. Also, although there
exists a phase ambiguity between the estimated autocorrelation
sequence $\boldsymbol{\hat{r}}_k$ and the true autocorrelation
sequence $\boldsymbol{r}_k$, this phase ambiguity can be removed
by noting that the power spectrum $S_k(\omega)$ is real and
non-negative. In addition, the power spectrum of each source is
automatically paired with its associated DoA and carrier frequency
due to the reason that both $\boldsymbol{\hat{R}}$ and
$\boldsymbol{\hat{A}}$ experience a common permutation operation.



\section{Uniqueness of CP Decomposition} \label{sec:CP-uniqueness}
We see that the uniqueness of the CP decomposition is crucial to
our proposed method. It is well known that the essential
uniqueness of CP decomposition can be guaranteed by Kruskal's
condition \cite{Kruskal77}. Let $k_{\boldsymbol{X}}$ denote the
k-rank of a matrix $\boldsymbol{X}$, which is defined as the
largest value of $k_{\boldsymbol{X}}$ such that every subset of
$k_{\boldsymbol{X}}$ columns of the matrix $\boldsymbol{X}$ is
linearly independent. We have the following theorem concerning the
uniqueness of CP decomposition.

\newtheorem{theorem}{Theorem}
\begin{theorem}
Let $(\boldsymbol{X},\boldsymbol{Y},\boldsymbol{Z})$ be a CP
solution which decomposes a third-order tensor
$\boldsymbol{\mathcal{X}}\in\mathbb{C}^{d_1\times d_2\times d_3}$
into $p$ rank-one arrays, where
$\boldsymbol{X}\in\mathbb{C}^{d_1\times p}$,
$\boldsymbol{Y}\in\mathbb{C}^{d_2\times p}$, and
$\boldsymbol{Z}\in\mathbb{C}^{d_3\times p}$. Suppose the following
Kruskal's condition
\begin{align}
k_{\boldsymbol{X}}+k_{\boldsymbol{Y}}+k_{\boldsymbol{Z}}\geq 2p+2
\label{Kruskals-condition}
\end{align}
holds and there is an alternative CP solution $(
\boldsymbol{\widehat{X}},\boldsymbol{\widehat{Y}},\boldsymbol{\widehat{Z}}
)$ which also decomposes $\boldsymbol{\mathcal{X}}$ into $p$
rank-one arrays. Then we have
$\boldsymbol{\widehat{X}}=\boldsymbol{X}\boldsymbol{\Pi}\boldsymbol{\Lambda}_x$,
$\boldsymbol{\widehat{Y}}=\boldsymbol{Y}\boldsymbol{\Pi}\boldsymbol{\Lambda}_y$,
and
$\boldsymbol{\widehat{Z}}=\boldsymbol{Z}\boldsymbol{\Pi}\boldsymbol{\Lambda}_z$,
where $\boldsymbol{\Pi}$ is a unique permutation matrix and
$\boldsymbol{\Lambda}_x$, $\boldsymbol{\Lambda}_y$, and
$\boldsymbol{\Lambda}_z$ are unique diagonal matrices such that
$\boldsymbol{\Lambda}_x\boldsymbol{\Lambda}_y\boldsymbol{\Lambda}_z=\boldsymbol{I}$.
\end{theorem}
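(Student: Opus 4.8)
The plan is to derive the result from Kruskal's permutation lemma, thereby reducing the essential-uniqueness question to a purely combinatorial statement about the support patterns that two competing factor matrices can generate. First I would rewrite both decompositions slice by slice. Writing $\boldsymbol{D}_j(\cdot)$ for the diagonal matrix carrying the $j$th row of its matrix argument, the equality of the two CP models forces, for every $j=1,\ldots,d_3$,
\[
\boldsymbol{X}\boldsymbol{D}_j(\boldsymbol{Z})\boldsymbol{Y}^T
=\boldsymbol{\widehat{X}}\boldsymbol{D}_j(\boldsymbol{\widehat{Z}})\boldsymbol{\widehat{Y}}^T .
\]
Stacking these slices reproduces the mode unfoldings and their Khatri--Rao structure, which is what couples the column support of any one factor to the other two and makes the k-rank hypotheses usable.

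Next I would invoke Kruskal's permutation lemma as the workhorse. For a vector $\boldsymbol{v}$ let $\nu(\boldsymbol{v})$ denote the number of its nonzero entries. The lemma states that if $\boldsymbol{A},\boldsymbol{\bar{A}}\in\mathbb{C}^{d\times p}$ have no zero columns and satisfy, for every $\boldsymbol{v}$,
\[
\nu(\boldsymbol{\bar{A}}^T\boldsymbol{v})\le p-k_{\boldsymbol{A}}+1
\;\Longrightarrow\;
\nu(\boldsymbol{A}^T\boldsymbol{v})\le \nu(\boldsymbol{\bar{A}}^T\boldsymbol{v}),
\]
then $\boldsymbol{\bar{A}}=\boldsymbol{A}\boldsymbol{\Pi}\boldsymbol{\Lambda}$ for a permutation matrix $\boldsymbol{\Pi}$ and a nonsingular diagonal matrix $\boldsymbol{\Lambda}$. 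The task then becomes to verify the hypothesis of this lemma for each of the three factors in turn, with $\boldsymbol{A}$ taken as $\boldsymbol{X}$, $\boldsymbol{Y}$, $\boldsymbol{Z}$ respectively.

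Verifying this support-count implication is the combinatorial core and the step I expect to be the main obstacle. The argument for the first factor proceeds by bounding, via the slice relations above, how few nonzero coordinates a mixed combination of columns of $\boldsymbol{\widehat{X}}$ can have when measured against $\boldsymbol{X}$, and showing that it cannot fall below the level forced by a single column unless the combination is in fact a single scaled column. Here the two remaining k-ranks $k_{\boldsymbol{Y}}$ and $k_{\boldsymbol{Z}}$ supply the ``budget'': the Kruskal bound $k_{\boldsymbol{X}}+k_{\boldsymbol{Y}}+k_{\boldsymbol{Z}}\ge 2p+2$ is precisely what guarantees $k_{\boldsymbol{Y}}+k_{\boldsymbol{Z}}\ge 2p+2-k_{\boldsymbol{X}}$, which is the inequality needed to carry the counting through; rotating the roles of the factors gives the analogous verifications for $\boldsymbol{Y}$ and $\boldsymbol{Z}$. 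This is where the delicate dimension counting lives, and I would expect to spend most of the effort establishing it cleanly.

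Finally, the three applications of the permutation lemma yield $\boldsymbol{\widehat{X}}=\boldsymbol{X}\boldsymbol{\Pi}_x\boldsymbol{\Lambda}_x$, $\boldsymbol{\widehat{Y}}=\boldsymbol{Y}\boldsymbol{\Pi}_y\boldsymbol{\Lambda}_y$, and $\boldsymbol{\widehat{Z}}=\boldsymbol{Z}\boldsymbol{\Pi}_z\boldsymbol{\Lambda}_z$. Substituting these back into the identity $\sum_r \boldsymbol{x}_r\circ\boldsymbol{y}_r\circ\boldsymbol{z}_r=\sum_r \boldsymbol{\widehat{x}}_r\circ\boldsymbol{\widehat{y}}_r\circ\boldsymbol{\widehat{z}}_r$ and matching the rank-one terms forces the three permutations to coincide, $\boldsymbol{\Pi}_x=\boldsymbol{\Pi}_y=\boldsymbol{\Pi}_z=:\boldsymbol{\Pi}$, and the scalings to obey $\boldsymbol{\Lambda}_x\boldsymbol{\Lambda}_y\boldsymbol{\Lambda}_z=\boldsymbol{I}$, since each surviving rank-one array is reproduced exactly once. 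Because the k-rank hypotheses make distinct columns of each factor non-proportional, $\boldsymbol{\Pi}$ and the three diagonal matrices are moreover uniquely determined, which is exactly the claimed essential uniqueness.
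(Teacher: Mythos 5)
The paper offers no proof of its own here: its ``proof'' of Theorem 1 is a citation to Stegeman and Sidiropoulos, whose argument is exactly the route you sketch --- frontal-slice equalities $\boldsymbol{X}\boldsymbol{D}_j(\boldsymbol{Z})\boldsymbol{Y}^T=\boldsymbol{\widehat{X}}\boldsymbol{D}_j(\boldsymbol{\widehat{Z}})\boldsymbol{\widehat{Y}}^T$, Kruskal's permutation lemma applied to each mode, and a final matching step that forces a common permutation and the scaling constraint $\boldsymbol{\Lambda}_x\boldsymbol{\Lambda}_y\boldsymbol{\Lambda}_z=\boldsymbol{I}$ (the last step is indeed routine, since Kruskal's condition with $k_{\boldsymbol{X}},k_{\boldsymbol{Y}},k_{\boldsymbol{Z}}\leq p$ forces every k-rank to be at least $2$, so no factor has proportional or zero columns and the permutation and scalings are unique). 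So as a road map, your proposal reconstructs the architecture of the proof the paper points to.

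As a proof, however, it has a genuine gap, and it is precisely the one you flag yourself: the verification of the permutation lemma's support-count hypothesis --- that for every $\boldsymbol{v}$ with $\nu(\boldsymbol{\widehat{X}}^T\boldsymbol{v})$ small enough one has $\nu(\boldsymbol{X}^T\boldsymbol{v})\leq\nu(\boldsymbol{\widehat{X}}^T\boldsymbol{v})$ --- is never carried out. This is not a technicality to be filled in later; it is the entire mathematical content of Kruskal's theorem. Carrying it through requires, among other things, relating $\nu(\boldsymbol{\widehat{X}}^T\boldsymbol{v})$ to the rank of the contracted slice $\sum_j v_j\boldsymbol{X}\boldsymbol{D}_j(\boldsymbol{Z})\boldsymbol{Y}^T$, lower-bounding ranks of matrices of the form $\boldsymbol{Y}\boldsymbol{D}(\boldsymbol{w})\boldsymbol{Z}^T$ via Sylvester-type inequalities in terms of $k_{\boldsymbol{Y}}$ and $k_{\boldsymbol{Z}}$, and a careful case analysis showing exactly how the budget $k_{\boldsymbol{Y}}+k_{\boldsymbol{Z}}\geq 2p+2-k_{\boldsymbol{X}}$ is spent; none of this counting appears in the proposal. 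A secondary, minor point: the permutation lemma is usually stated with the threshold $p-\mathrm{rank}(\boldsymbol{X})+1$ rather than $p-k_{\boldsymbol{X}}+1$; your k-rank version demands a strictly stronger hypothesis, which makes the already-missing verification harder, not easier. In short: right strategy, same route as the reference the paper cites, but what you have written is a plan rather than a proof.
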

\begin{proof}
A rigorous proof can be found in \cite{StegemanSidiropoulos07}.
\end{proof}

Note that Kruskal's condition cannot hold when $R=1$. However, in
that case the uniqueness has been proven by Harshman
\cite{Harshman72}. Kruskal's sufficient condition is also
necessary for $R=2$ and $R=3$, but not for $R>3$
\cite{StegemanSidiropoulos07}.

From the above theorem, we know that if
\begin{align}
k_{\boldsymbol{R}}+k_{\boldsymbol{A}}+k_{\boldsymbol{A}^{*}}\geq
2K+2 \label{Kruskals-condition-v2}
\end{align}
then the CP decomposition of $\boldsymbol{\mathcal{R}}^x$ is
essentially unique. Since $\boldsymbol{A}^{*}$ is the complex
conjugate of $\boldsymbol{A}$, we only need to examine the k-ranks
of $\boldsymbol{A}$ and $\boldsymbol{R}$.

Note that the $(n,k)$th entry of $\boldsymbol{A}$ is given by
\begin{align}
a_{n k} = e^{-j((n-1) \tau_k \omega_k + \Delta_{n}\omega_k)}
\end{align}
which is a function of the time delay factor $\Delta_{n}$. It is
not difficult to design a set of time delay factors $\{\Delta_n\}$
such that $k_{\boldsymbol{A}}=K$. For example, we divide $N$
antennas into two groups $S_1=\{1,\dots,K\}$ and
$S_2=\{K+1,\dots,N\}$. We set the delay factors in the first group
to be linearly proportional to $n-1$, i.e. $\Delta_n=(n-1)\nu$ for
$n\in S_1$, where $\nu\ge 0$ is a constant. In this case, the
first $K$ rows of $\boldsymbol{A}$ form a Vandermonde matrix:
\begin{align}
\boldsymbol{A}_{[1:K,:]}=\text{Vand}(\tau_1\omega_1+\nu\omega_1,\dots,\tau_K\omega_K+\nu\omega_K)
\end{align}
where $\text{Vand}(\phi_1,\dots,\phi_K)$ is defined as
\begin{displaymath}
\text{Vand}(\phi_1,\dots,\phi_K) \triangleq \left[
\begin{array}{cccc}
e^{-j(0 \phi_1)} & \ldots & e^{-j(0 \phi_K)}\\
e^{-j(1 \phi_1)} & \ldots & e^{-j(1 \phi_K)}\\
\vdots & \ddots & \vdots\\
e^{-j((K-1) \phi_1)} & \ldots & e^{-j((K-1) \phi_K)}\\
\end{array} \right]
\end{displaymath}
Thus $\boldsymbol{A}$ is full column rank with
$k_{\boldsymbol{A}}=K$ as long as $\{\omega_k\tau_k+\nu\omega_k\}$
are distinct from each other. If we set $\nu=0$, we only need
$\{\omega_k\tau_k\}$, i.e. $\{\omega_k\cos\theta_k\}$, are
distinct from each other. For the case where the quantities
$\{\omega_k\cos\theta_k\}$ for different source signals may be
identical, we can set $\nu\neq 0$, in which case we still have
$k_{\boldsymbol{A}}=K$ provided that the carrier frequencies
$\{\omega_k\}$ are mutually different. In other words, as long as
Assumption A2 is satisfied, we can always set an appropriate value
of $\nu$ to ensure $k_{\boldsymbol{A}}=K$. For other more general
choices of time delay factors $\{\Delta_n\}$, it can be
numerically checked that the k-rank of $\boldsymbol{A}$ still
equals to $K$ with a high probability, although a rigorous proof
is difficult.




Since we have $k_{\boldsymbol{A}}=K$, we only need
$k_{\boldsymbol{R}}\geq 2$ in order to satisfy Kruskal's
condition. This condition $k_{\boldsymbol{R}}\geq 2$ is met if
every two columns of $\boldsymbol{R}$ are linearly independent.
Note that the $k$th column of $\boldsymbol{R}$,
$\boldsymbol{r}_k$, is a truncated autocorrelation sequence of the
$k$th modulated signal $s_k(t)e^{j\omega_k t}$. Clearly, if the
baseband signals $\{s_k(t)\}$ have distinct power spectra, then
any two columns of $\boldsymbol{R}$ are linearly independent,
which implies $k_{\boldsymbol{R}}\geq 2$. In practice, since
source signals usually have different bandwidths, the diverse
power spectra condition can be easily satisfied. Even if the
baseband signals $\{s_k(t)\}$ have identical power spectra, the
autocorrelation sequences of any two modulated signals
$\{s_{k_1}(t)e^{j\omega_{k_1}t},s_{k_2}(t)e^{j\omega_{k_2}t}\}$
could still be linearly independent as long as their carrier
frequencies satisfy
\begin{align}
\mod\{|\omega_{k_1}-\omega_{k_2}|,2\pi f_s\}\neq 0
\label{omega-condition}
\end{align}
The above condition ensures that autocorrelation sequences
$\{\boldsymbol{r}_k\}$ of different modulated signals have
distinct exponential terms $\{e^{j\omega_{k}l T_s}\}$ (see
(\ref{source-correlation})). Due to the randomness of locations of
the carrier frequencies, the condition (\ref{omega-condition}) is
very likely to be satisfied in practice. As a result, we have
$k_{\boldsymbol{R}}\geq 2$.


\section{CRB Analysis} \label{sec:CRB}
In this section, we develop Cram\'{e}r-Rao bound (CRB) results for
the joint DoA, carrier frequency, and power spectra estimation
problem considered in this paper. As is well known, the CRB is a
lower bound on the variance of any unbiased estimator
\cite{Kay93}. It provides a benchmark for evaluating the
performance of our proposed method. In addition, the CRB results
illustrate the behavior of the resulting bounds, which helps
understand the effect of different system parameters, including
the noise power $\sigma^2$, the number of antennas $N$ and the
number of samples $N_s$, on the estimation performance.



\subsection{Signal Model}
Recall that the analog signal at each antenna is sampled with a
sampling rate $f_s=1/T_s$. The sampled signal at the $n$th antenna
can be written as (cf. (\ref{signal-model}))
\begin{align}
x_n(lT_s) &= \sum_{k=1}^K s_k(lT_s) e^{j \omega_{k}
(lT_s-(n-1)\tau_k-\Delta_n)} + w_n(lT_s) \nonumber \\
&= \sum_{k=1}^K a_{nk} s_k(lT_s) e^{j \omega_{k} (lT_s)} +
w_n(lT_s)
\end{align}
The above signal model can be rewritten in a vector-matrix form as
\begin{align}
\boldsymbol{x}_l=\boldsymbol{A}\boldsymbol{s}_l+\boldsymbol{w}_l,\quad
l=0,\dots,N_s-1
\end{align}
where $\boldsymbol{A}$ is defined in (\ref{A-definition}),
$\boldsymbol{x}_l \triangleq
\left[x_1(lT_s)\phantom{0}\dots\phantom{0}x_{N}(lT_s)\right]^T$,
$\boldsymbol{w}_l \triangleq
\left[w_1(lT_s)\phantom{0}\dots\phantom{0}w_{N}(lT_s)\right]^T$,
and
\begin{align}
\boldsymbol{s}_l &\triangleq
\left[s_1(lT_s)e^{j\omega_1(lT_s)}\phantom{0}\dots\phantom{0}s_K(lT_s)e^{j\omega_K(lT_s)}\right]^T
\nonumber
\end{align}
Suppose we collect a total number of $N_s$ ($l=0,\ldots,N_s-1$)
samples. The received signal can thus be expressed as
\begin{align}
\boldsymbol{X}=\boldsymbol{A}\boldsymbol{S}+\boldsymbol{W}
\end{align}
where
\begin{align}
\boldsymbol{X}&\triangleq\left[\boldsymbol{x}_0\phantom{0}\dots\phantom{0}\boldsymbol{x}_{N_s-1}\right] \nonumber\\
\boldsymbol{S}&\triangleq\left[\boldsymbol{s}_0\phantom{0}\dots\phantom{0}\boldsymbol{s}_{N_s-1}\right] \nonumber\\
\boldsymbol{W}&\triangleq\left[\boldsymbol{w}_0\phantom{0}\dots\phantom{0}\boldsymbol{w}_{N_s-1}\right].
\nonumber
\end{align}
Let $\boldsymbol{x}\triangleq\text{vec}(\boldsymbol{X}^T)$, where
$\text{vec}(\boldsymbol{Z})$ denotes a vectorization operation
which stacks the columns of $\boldsymbol{Z}$ into a single column
vector. We have
\begin{align}
\boldsymbol{x}=\boldsymbol{\bar{A}}\boldsymbol{s}+\boldsymbol{w}
\label{vector-matrix}
\end{align}
where
\begin{align}
\boldsymbol{x}\triangleq\mathrm{vec}(\boldsymbol{X}^T),&
\quad \boldsymbol{w}\triangleq\mathrm{vec}(\boldsymbol{W}^T)\nonumber\\
\boldsymbol{s}\triangleq\mathrm{vec}(\boldsymbol{S}^T),& \quad
\boldsymbol{\bar{A}}\triangleq \boldsymbol{A}\otimes
\boldsymbol{I}_{N_s} \label{A-bar}
\end{align}
in which $\boldsymbol{I}_{n}$ denotes an $n\times n$ identity
matrix. We assume that
$\boldsymbol{w}\sim\mathcal{CN}(\boldsymbol{0},\sigma^2\boldsymbol{I}_{N\cdot
N_s})$ and
$\boldsymbol{s}\sim\mathcal{CN}(\boldsymbol{0},\boldsymbol{R}_s)$
follow a circularly-symmetric complex Gaussian distribution, where
$\boldsymbol{R}_s$ denotes the source covariance matrix which
needs to be estimated along with other parameters. Note that in
our proposed algorithm, the additive noise $\boldsymbol{w}$ and
the source signal $\boldsymbol{s}$ are not restricted to be
circularly-symmetric complex Gaussian. Here we make such an
assumption in order to facilitate the CRB analysis.


Under the assumption that $\boldsymbol{w}$ and $\boldsymbol{s}$
are circularly-symmetric complex Gaussian random variables, we can
readily verify that $\boldsymbol{x}$ also follows a
circularly-symmetric complex Gaussian distribution, i.e.
$\boldsymbol{x}\sim\mathcal{CN}(\boldsymbol{0},\boldsymbol{R}_x)$,
where
\begin{align}
\boldsymbol{R}_x  &\triangleq
\mathbb{E}\left[\boldsymbol{x}\boldsymbol{x}^{H}\right]
=\mathbb{E}\left[\boldsymbol{\bar{A}}\boldsymbol{s}\boldsymbol{s}^{H}
\boldsymbol{\bar{A}}^{H}\right]
+\mathbb{E}\left[\boldsymbol{w}\boldsymbol{w}^{H}\right] \nonumber\\
&=\boldsymbol{\bar{A}} \boldsymbol{R}_s \boldsymbol{\bar{A}}^H +
\sigma^2\boldsymbol{I}_{N N_s}
\end{align}
From Assumption A1, we know that $\boldsymbol{R}_s$ is a block
diagonal matrix, i.e.
\begin{align}
\boldsymbol{R}_s=\mathrm{diag}(\boldsymbol{P}_1,\dots,\boldsymbol{P}_K).
\label{Rs}
\end{align}
where $\boldsymbol{P}_k\triangleq\mathbb{E}
\left[\tilde{\boldsymbol{s}}_k
\tilde{\boldsymbol{s}}_k^{H}\right]$ denotes the autocorrelation
matrix of the $k$th signal, and $\tilde{\boldsymbol{s}}_k$ is the
transpose of the $k$th row of $\boldsymbol{S}$, i.e.
\begin{align}
\tilde{\boldsymbol{s}}_k \triangleq
\left[s_k(0T_s)e^{j\omega_k(0T_s)} \dots
s_k((N_s-1)T_s)e^{j\omega_k((N_s-1)T_s)}\right]^T \nonumber
\end{align}
Also, in Assumption A1, each source is assumed to be wide-sense
stationary. Therefore the autocorrelation matrix
$\boldsymbol{P}_k$ is a Hermitian-Toeplitz matrix. Here Toeplitz
means that it has diagonal-constant entries, i.e. each descending
diagonal from left to right is constant. Let $p_0^{k}$ denote the
constant for elements located on the main diagonal, and $p_l^{k}$,
$l\ge1$, denote the constant for elements located on the $l$th
diagonal below the main diagonal of $\boldsymbol{P}_k$. Let
\begin{displaymath}
\boldsymbol{T}_l \triangleq \left( \begin{array}{cc}
\boldsymbol{0} & \boldsymbol{I}_{N_s-l} \\
\boldsymbol{0} & \boldsymbol{0} \\
\end{array} \right) \in \mathbb{R}^{N_s\times N_s}
\end{displaymath}
and
\begin{displaymath}
\boldsymbol{T}_{-l} \triangleq \left( \begin{array}{cc}
\boldsymbol{0} & \boldsymbol{0} \\
\boldsymbol{I}_{N_s-l} & \boldsymbol{0} \\
\end{array} \right) \in \mathbb{R}^{N_s\times N_s}
\end{displaymath}
The autocorrelation matrix $\boldsymbol{P}_k$ can thus be
expressed as
\begin{align}
\boldsymbol{P}_k= p_{0}^{k} \boldsymbol{I}_{N_s}  + \sum_{l=1}^{L}
\left[ p_l^{k} \boldsymbol{T}_{-l} + (p_l^{k})^{*}
\boldsymbol{T}_l \right] \label{Pk}
\end{align}
where $L$ is chosen to be sufficiently large to ensure $p_l^{k}=0$
for $l>L$. From (\ref{Pk}), we can see that $\boldsymbol{P}_k$ is
characterized by parameters
\begin{align}
\boldsymbol{p}_k\triangleq \left[ p_{0}^{k} \phantom{0}
\Re(p_{1}^{k}) \phantom{0}\dots\phantom{0} \Re(p_{L}^{k})
\phantom{0} \Im(p_{1}^{k}) \phantom{0}\dots\phantom{0}
\Im(p_{L}^{k}) \right]
\end{align}
As a result, $\boldsymbol{R}_s$ is characterized by parameters
\begin{align}
\boldsymbol{p} \triangleq \left[ \boldsymbol{p}_1
\phantom{0}\dots\phantom{0} \boldsymbol{p}_K \right]
\end{align}

On the other hand, notice that $\boldsymbol{\bar{A}}$ is a
parameterized matrix, with each column of $\boldsymbol{A}$
determined by the DoA and the carrier frequency of each source,
i.e. $\{\theta_k,\omega_k\}$. Unfortunately, the value ranges for
the DoA and the carrier frequency differ by orders of magnitude,
which may cause numerical instability in computing the CRB matrix.
To address this difficulty, we, instead, analyze the CRB for the
following two parameters $\{\xi_k,\psi_k\}$ defined as
\begin{align}
\xi_k\triangleq \omega_k\tau_k \qquad \psi_k\triangleq \omega_k/c
\end{align}
where $c$ is a parameter appropriate chosen (e.g. $c=10^9$) such
that values of $\xi_k$ and $\psi_k$ roughly have the same scale.
Also, we define
\begin{align}
\boldsymbol{\xi} &\triangleq \left[ \xi_1 \phantom{0}\dots\phantom{0} \xi_K\right] \nonumber\\
\boldsymbol{\psi} &\triangleq \left[ \psi_1
\phantom{0}\dots\phantom{0} \psi_K\right] \nonumber
\end{align}

We see that the complete set of parameters to be estimated include
\begin{align}
\boldsymbol{\alpha} \triangleq \left[ \boldsymbol{\xi} \phantom{0}
\boldsymbol{\psi} \phantom{0} \boldsymbol{p} \phantom{0} \sigma^2
\right]
\end{align}
Recall that $\boldsymbol{x}$ follows a complex Gaussian
distribution with zero mean and covariance matrix
$\boldsymbol{R}_x$. Therefore the log-likelihood function of
$\boldsymbol{\alpha}$ can be expressed as
\begin{align}
L(\boldsymbol{\alpha})\propto - \ln{|\boldsymbol{R}_x|} -
\boldsymbol{x}^{H} \boldsymbol{R}_x^{-1} \boldsymbol{x}
\end{align}


\subsection{Calculation of The CRB Matrix}
Since the random vector $\boldsymbol{x}$ follows a
circularly-symmetric complex Gaussian distribution, we can resort
to the Slepian-Bangs formula
\cite{StoicaNehorai90,StoicaLarsson01} to compute the Fisher
information matrix (FIM). According to the Slepian-Bangs formula,
the $(i,j)$th element of the FIM $\boldsymbol{\Omega}$ is
calculated as
\begin{align}
\Omega_{ij}=\mathrm{tr}\left(
\boldsymbol{R}_x^{-1}\frac{\partial\boldsymbol{R}_x}{\partial\alpha_i}
\boldsymbol{R}_x^{-1}\frac{\partial\boldsymbol{R}_x}{\partial\alpha_j}
\right)
\end{align}
where $\alpha_i$ and $\alpha_j$ denote the $i$th and the $j$th
entries of $\boldsymbol{\alpha}$, respectively.

By utilizing the structures of $\boldsymbol{\bar{A}}$ and
$\boldsymbol{R}_s$ (cf. (\ref{A-bar}) and (\ref{Rs})),
$\boldsymbol{R}_x$ can be expressed as
\begin{align}
\boldsymbol{R}_x = &[\boldsymbol{a}_1\otimes \boldsymbol{I}_{N_s}
\phantom{0}\dots\phantom{0} \boldsymbol{a}_K\otimes
\boldsymbol{I}_{N_s}]
\cdot \mathrm{diag}(\boldsymbol{P}_1,\dots,\boldsymbol{P}_K)  \nonumber\\
& \cdot [\boldsymbol{a}_1\otimes \boldsymbol{I}_{N_s}
\phantom{0}\dots\phantom{0} \boldsymbol{a}_K\otimes
\boldsymbol{I}_{N_s}]^H
+ \sigma^2\boldsymbol{I}_{N\cdot N_s} \nonumber\\
= &[\boldsymbol{a}_1\otimes \boldsymbol{P}_1 \phantom{0}\dots\phantom{0} \boldsymbol{a}_K\otimes \boldsymbol{P}_K] \nonumber\\
& \cdot [\boldsymbol{a}_1\otimes \boldsymbol{I}_{N_s}
\phantom{0}\dots\phantom{0} \boldsymbol{a}_K\otimes
\boldsymbol{I}_{N_s}]^H
+ \sigma^2\boldsymbol{I}_{N\cdot N_s} \nonumber\\
= & \sum_{k=1}^{K} (\boldsymbol{a}_k\boldsymbol{a}_k^{H})\otimes
\boldsymbol{P}_k + \sigma^2\boldsymbol{I}_{N\cdot N_s} \label{Rx}
\end{align}
where $\boldsymbol{a}_k$, defined in (\ref{ak-definition}), is the
$k$th column of $\boldsymbol{A}$.

We first compute the partial derivative of $\boldsymbol{R}_x$ with
respect to $\xi_k$ and $\psi_k$. From (\ref{a-definition}) and the
definition of $\{\xi_k,\psi_k\}$, we can write
\begin{align}
a_{n k} = e^{-j( (n-1) \xi_k + c\Delta_{n}\psi_k)}
\end{align}
Thus we have
\begin{align}
\frac{\partial\boldsymbol{a}_k}{\partial\xi_k} &=-j\cdot\mathrm{diag}(0,\dots,N-1)\cdot\boldsymbol{a}_k
\label{eqn4} \\
\frac{\partial\boldsymbol{a}_k}{\partial\psi_k} &=-j \cdot c \cdot
\mathrm{diag}(\Delta_1,\dots,\Delta_{N})\cdot\boldsymbol{a}_k
\label{eqn5}
\end{align}
Combining (\ref{Rx}) and (\ref{eqn4})--(\ref{eqn5}), we have
\begin{align}
\frac{\partial\boldsymbol{R}_x}{\partial\xi_k}
&=\left(\frac{\partial\boldsymbol{a}_k}{\partial\xi_k}\boldsymbol{a}_k^{H}
+ \boldsymbol{a}_k
\frac{\partial\boldsymbol{a}_k^H}{\partial\xi_k} \right)
\otimes \boldsymbol{P}_k \\
\frac{\partial\boldsymbol{R}_x}{\partial\psi_k}
&=\left(\frac{\partial\boldsymbol{a}_k}{\partial\psi_k}\boldsymbol{a}_k^{H}
+ \boldsymbol{a}_k
\frac{\partial\boldsymbol{a}_k^H}{\partial\psi_k} \right) \otimes
\boldsymbol{P}_k .
\end{align}
Similarly, we can obtain the partial derivatives with respect to
other parameters as follows
\begin{align}
\frac{\partial\boldsymbol{R}_x}{\partial(p_{0}^{k})} &=\left(
\boldsymbol{a}_k \boldsymbol{a}_k^{H}\right)
\otimes \left( \frac{\partial\boldsymbol{P}_k}{\partial (p_{0}^{k})} \right) \nonumber\\
&=\left( \boldsymbol{a}_k \boldsymbol{a}_k^{H}\right)
\otimes \boldsymbol{I}_{N_s} \\
\frac{\partial\boldsymbol{R}_x}{\partial(\Re(p_{l}^{k}))} &=\left(
\boldsymbol{a}_k \boldsymbol{a}_k^{H}\right)
\otimes \left( \frac{\partial\boldsymbol{P}_k}{\partial (\Re(p_{l}^{k}))} \right) \nonumber\\
&=\left( \boldsymbol{a}_k \boldsymbol{a}_k^{H}\right)
\otimes \left(\boldsymbol{T}_{-l}+\boldsymbol{T}_{l}\right) \\
\frac{\partial\boldsymbol{R}_x}{\partial(\Im(p_{l}^{k}))} &=\left(
\boldsymbol{a}_k \boldsymbol{a}_k^{H}\right)
\otimes \left( \frac{\partial\boldsymbol{P}_k}{\partial (\Im(p_{l}^{k}))} \right) \nonumber\\
&=\left( \boldsymbol{a}_k \boldsymbol{a}_k^{H}\right) \otimes
\left(j\boldsymbol{T}_{-l}-j\boldsymbol{T}_{l} \right)
\end{align}
and
\begin{align}
\frac{\partial\boldsymbol{R}_x}{\partial(\sigma^2)}
=\boldsymbol{I}_{N\cdot N_s}.
\end{align}
After obtaining the FIM $\boldsymbol{\Omega}$, the CRB can be
calculated as \cite{Kay93}
\begin{align}
\mathrm{CRB}(\boldsymbol{\alpha})=\boldsymbol{\Omega}^{-1}.
\end{align}

\begin{figure}[!t]
 \centering
\subfigure[True and estimated carrier frequencies and DoAs.]
{\includegraphics[width=3.5in]{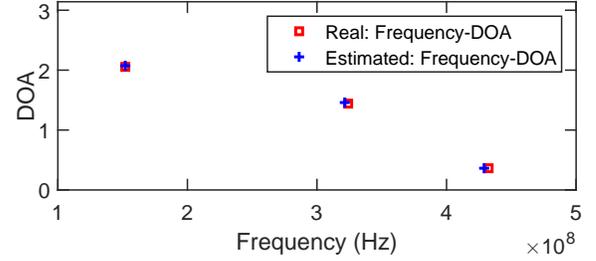}} \\
\subfigure[Original power spectra of sources.]
{\includegraphics[width=3.5in]{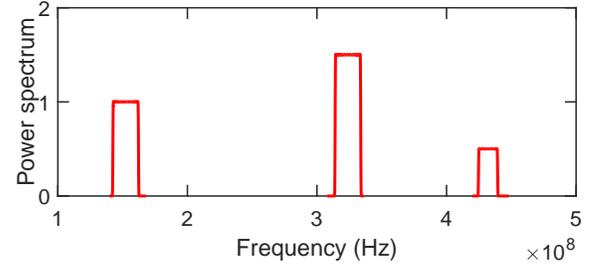}} \\
\subfigure[Estimated power spectra of sources.]
{\includegraphics[width=3.5in]{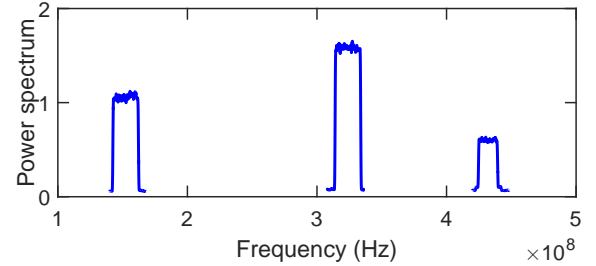}}
  \caption{True and estimated carrier frequencies, DoAs, and power spectra of sources, $\text{SNR}=5$dB.}
   \label{fig:example1}
\end{figure}

\section{Simulation Results} \label{sec:experiments}
In this section, we carry out experiments to illustrate the
performance of our proposed method. In our simulations, we set
$f_{\text{nyq}}=1\text{GHz}$. The distance between two adjacent
antennas, $d$, is set equal to $d=0.8\times C/f_{\text{nyq}}$ in
order to meet the condition in Assumption A6. The number of
antennas is set to $N=8$, and for simplicity, the time delay
factors are set as
\begin{align}
\Delta_n = \left\{
\begin{array}{ll}
0 \ \text{s}, & n=1,\dots,N/2\\
10^{-9} \ \text{s}, & n=N/2+1,\dots,N
\end{array} \right.
\end{align}
With this setup, the condition (\ref{time-delay-assumption}) can
be satisfied for $n=N/2-1$. The signal-to-noise ratio (SNR) is
defined as
\begin{align}
\text{SNR} \triangleq \frac{\mathbb{E}[|s(t)|^2]}{\sigma^2}
\end{align}

We first consider the case in which $K=3$ uncorrelated, wide-sense
stationary sources spreading over the wide frequency band $(0,
500]\text{MHz}$ impinge on a ULA of $N$ antennas. The DoAs of
these three sources are given respectively by $\theta_1=2.051$,
$\theta_2=1.447$, and $\theta_3=0.361$. The carrier frequencies
and bandwidths associated with these sources are set to
$f_1=152\text{MHz}$, $f_2=323\text{MHz}$, $f_3=432\text{MHz}$,
$B_1=20\text{MHz}$, $B_2=20\text{MHz}$, and $B_3=15\text{MHz}$.
The complex baseband signals are generated by passing the complex
white Gaussian noise through low-pass filters with different
cutoff frequencies. Also, the number of data samples used for
calculating the correlation matrices is set to $N_s=10^5$. The
sampling rate $f_s$ is chosen to be $f_s=28\text{MHz}$, which is
slightly higher than the minimum sampling rate $f_s\geq
B=\max\{B_1,B_2,B_3\}$ required for perfect recover of the power
spectrum of the wide frequency band. The SNR is set to 5dB. Fig.
\ref{fig:example1}(a) shows the true (marked with `$\square$') and
the estimated (marked with `$+$') carrier frequencies and DoAs for
the three sources. We can see that the estimated carrier
frequencies and DoAs coincide with the groundtruth well. Fig.
\ref{fig:example1}(b) and Fig. \ref{fig:example1}(c) respectively
depict the original power spectrum and the estimated power
spectrum of the wide frequency band. It can be observed that our
proposed method, even with a low SNR and a sampling rate far below
the Nyquist rate, is able to accurately identify the locations of
the occupied bands.

\begin{figure}[!t]
 \centering
\subfigure[True and estimated carrier frequencies and DoAs.]
{\includegraphics[width=3.5in]{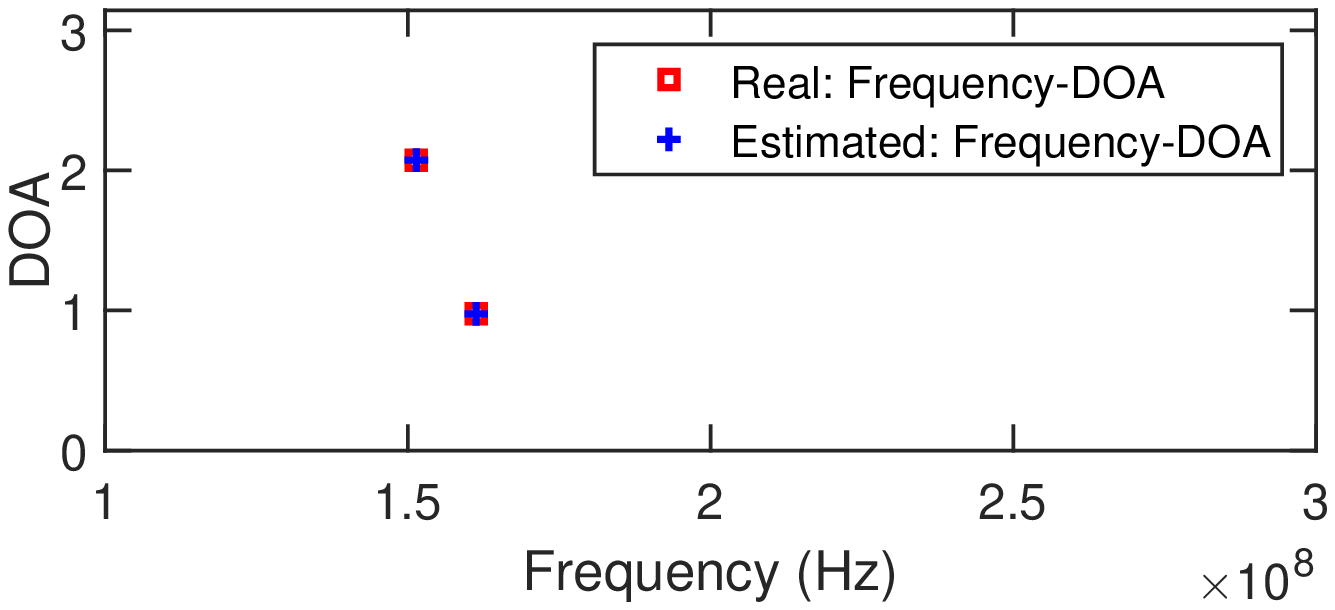}} \\
\subfigure[Original power spectra of sources.]
{\includegraphics[width=3.5in]{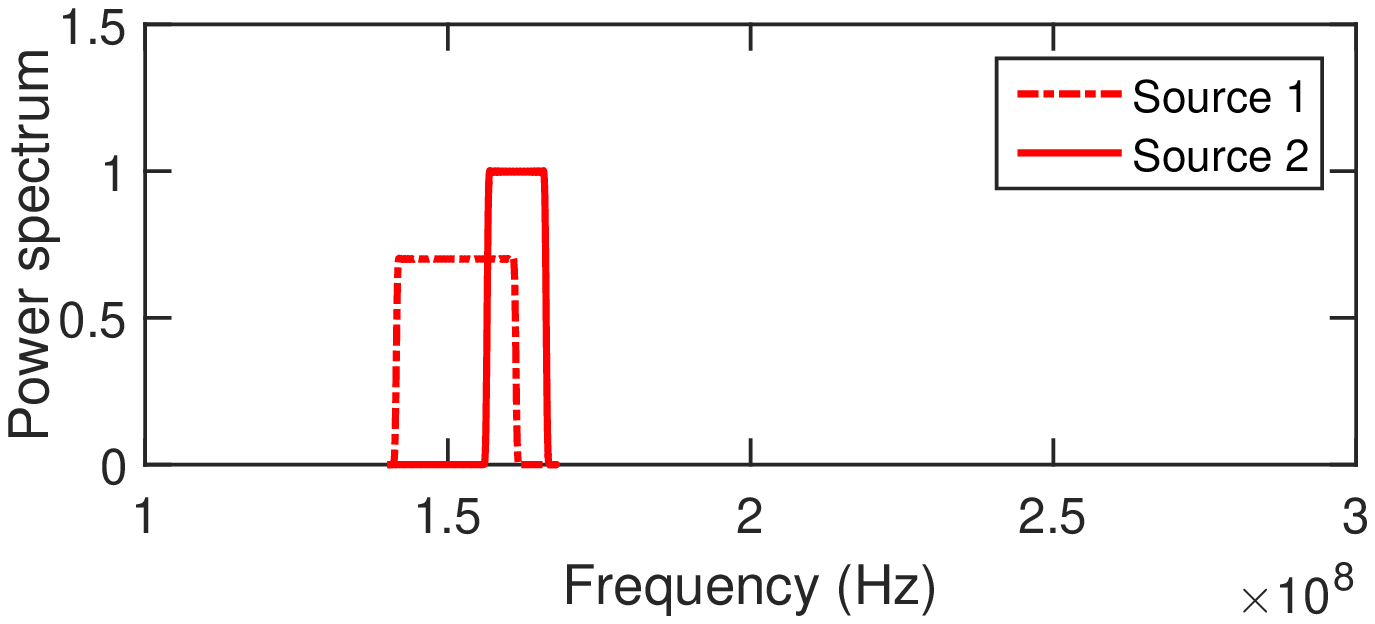}} \\
\subfigure[Estimated power spectra of sources.]
{\includegraphics[width=3.5in]{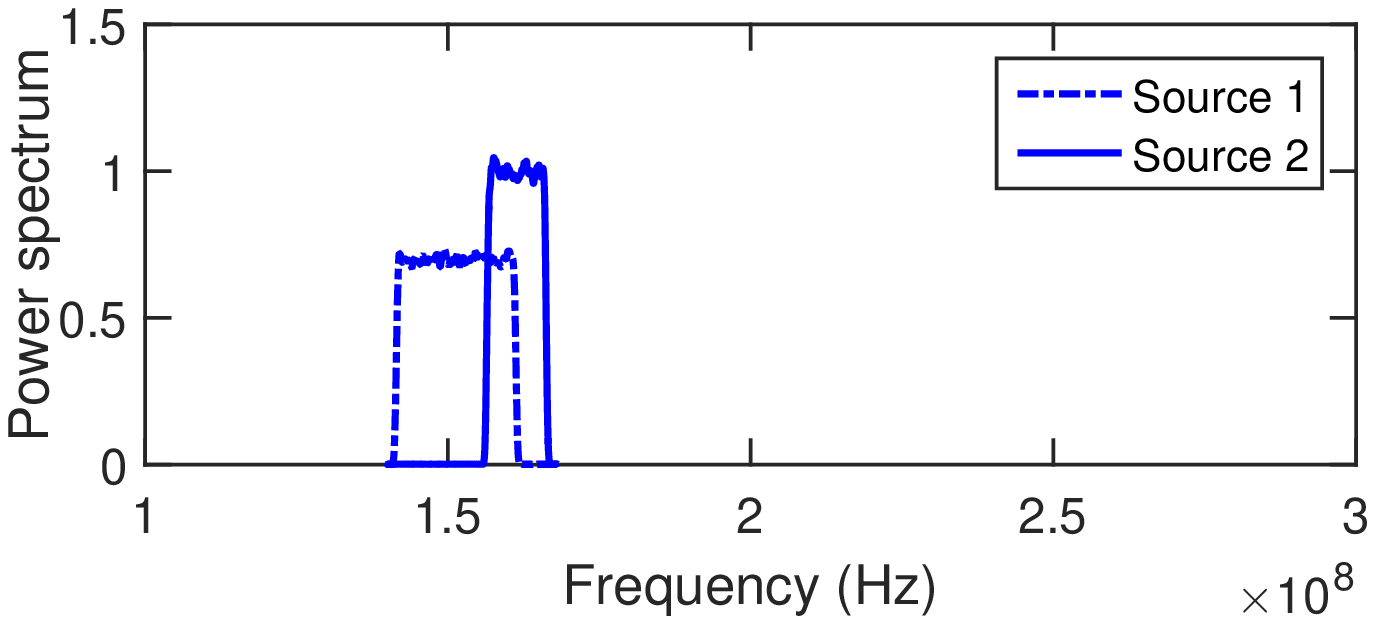}}
  \caption{Estimated carrier frequencies, DoAs and power spectra for sources that have
  partial spectral overlap, $\text{SNR}=20$dB.}
   \label{fig:example2}
\end{figure}

Next, we examine the scenario where frequency bands of the
narrowband sources overlap each other. Set $K=2$. The DoAs of
these two sources are given respectively by $\theta_1=2.064$ and
$\theta_2=0.968$. The carrier frequencies and bandwidths
associated with these two sources are set to $f_1=
151.36\text{MHz}$, $f_2=161.36\text{MHz}$, $B_1=20\text{MHz}$, and
$B_2=10\text{MHz}$. The power spectra associated with the two
sources are shown in Fig. \ref{fig:example2}(b), from which we can
see that the two sources partially overlap in the frequency
domain. The number of data samples $N_s$ and the sampling rate
$f_s$ remain the same as in the previous example. The SNR is set
to 20dB. The estimated carrier frequencies, DoAs, and the power
spectra of the two sources are plotted in Fig.
\ref{fig:example2}(a) and Fig. \ref{fig:example2}(c). We see that
our proposed method works well for sources with partially
overlapping frequency bands. This example shows that our proposed
method not only can perform wideband spectrum sensing, but also
has the ability to blindly separate power spectra of sources that
have partial spectral overlap.





\begin{figure}[!t]
 \centering
\subfigure[]
{\includegraphics[width=3.5in]{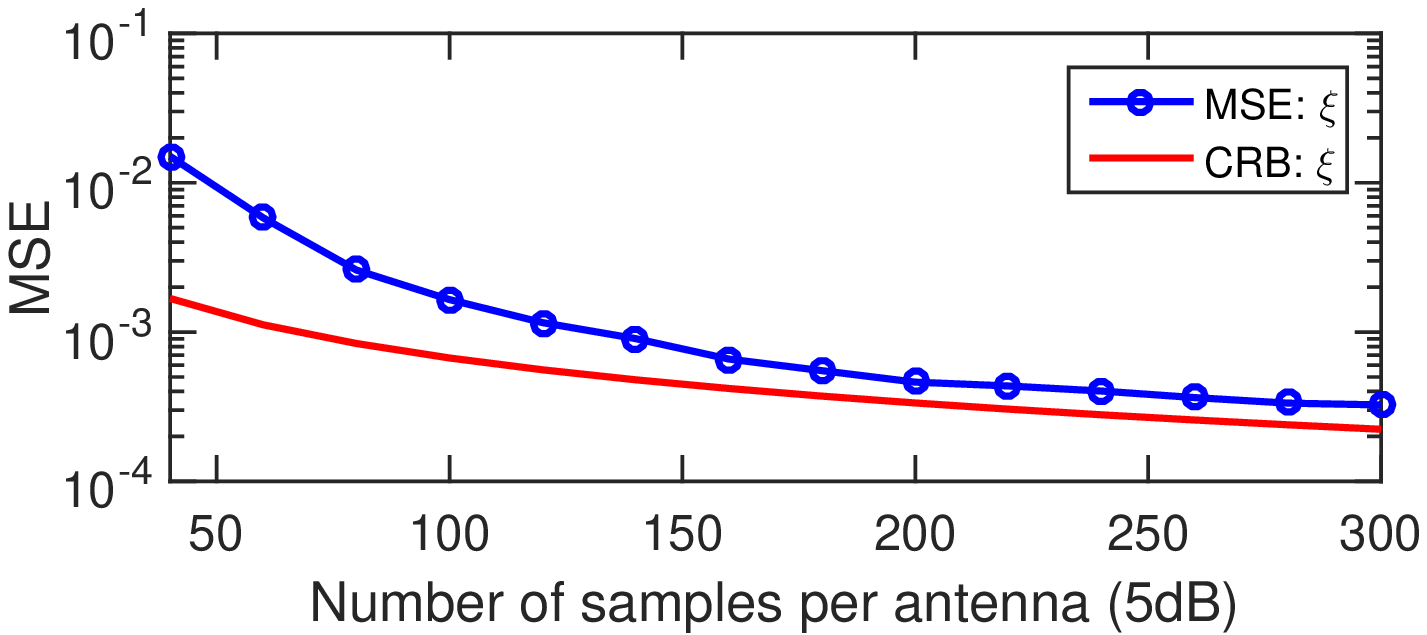}} \\
\subfigure[]
{\includegraphics[width=3.5in]{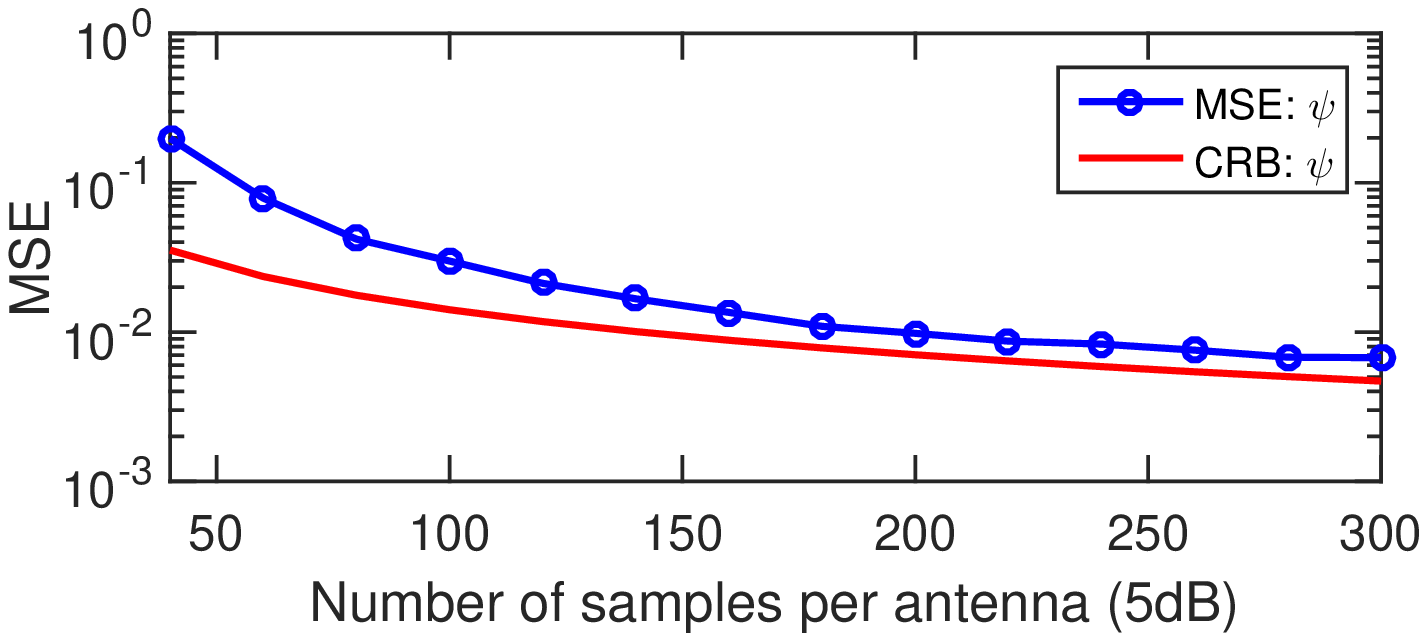}} \\
\subfigure[]
{\includegraphics[width=3.5in]{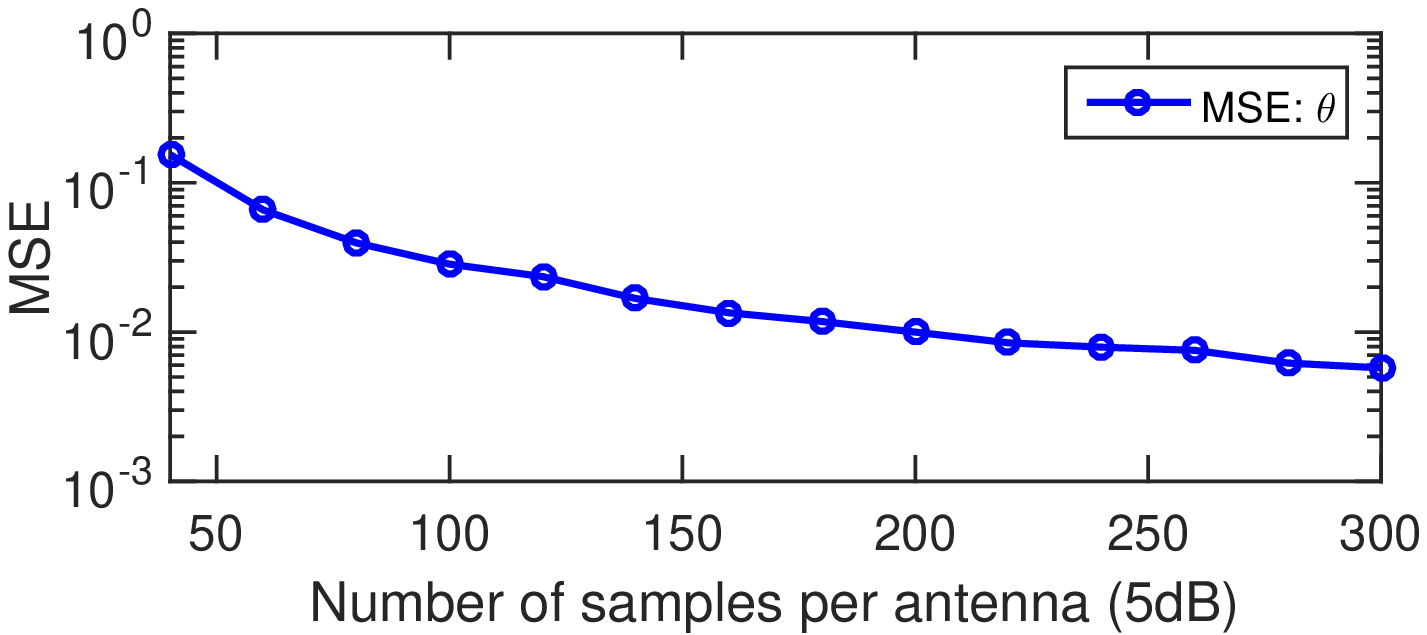}} \\
\subfigure[] {\includegraphics[width=3.5in]{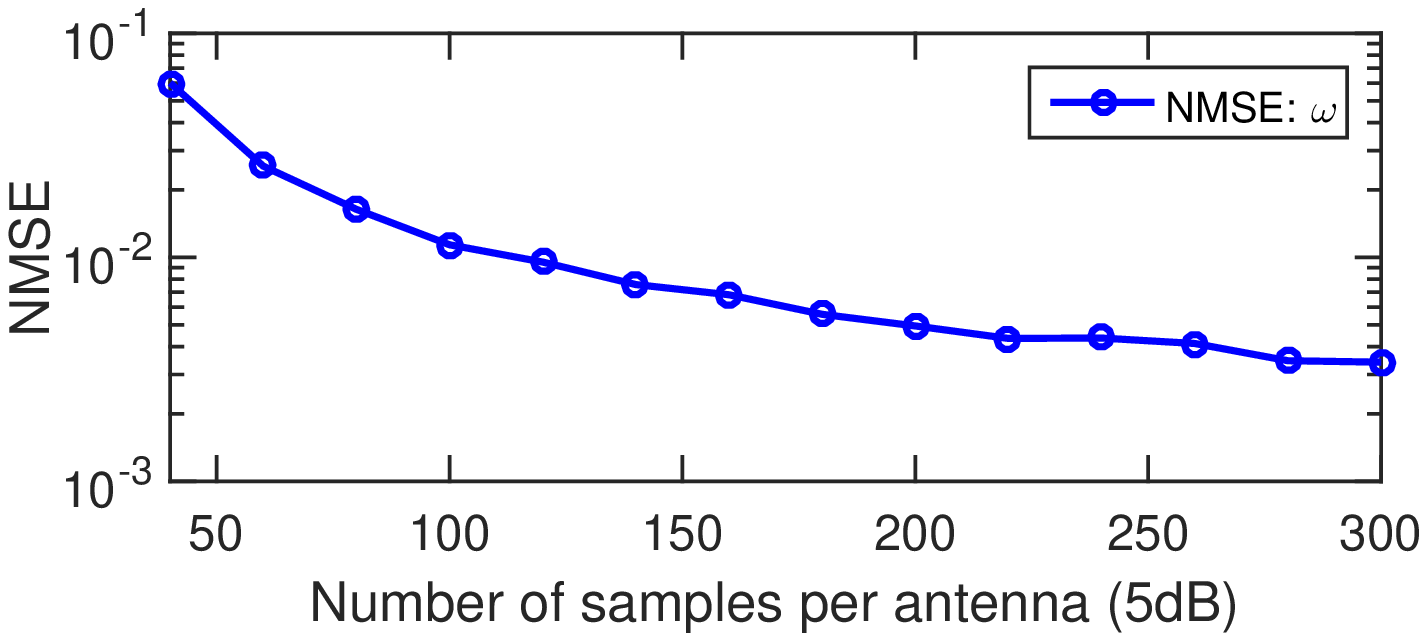}}
  \caption{MSEs and NMSE vs. the number of samples per antenna, where $N=8$ and $\text{SNR}=5$dB.}
   \label{fig:MSE-5dB}
\end{figure}

\begin{figure}[!t]
 \centering
\subfigure[]
{\includegraphics[width=3.5in]{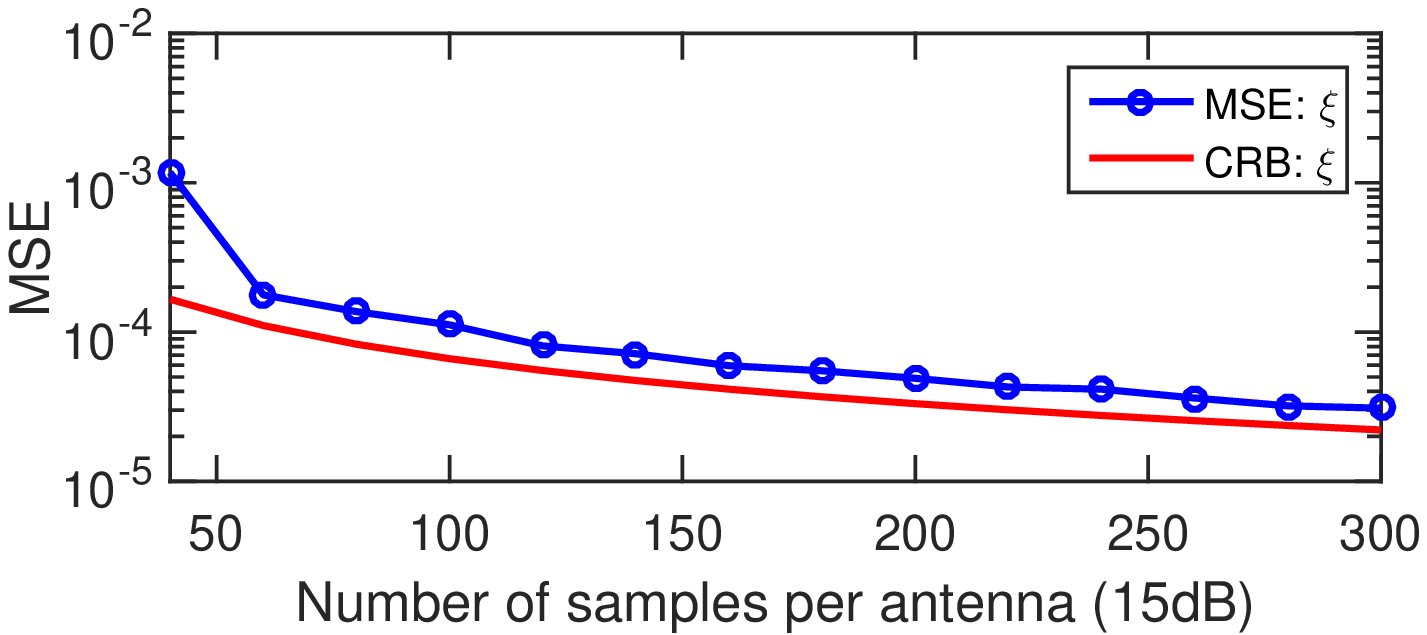}} \\
\subfigure[]
{\includegraphics[width=3.5in]{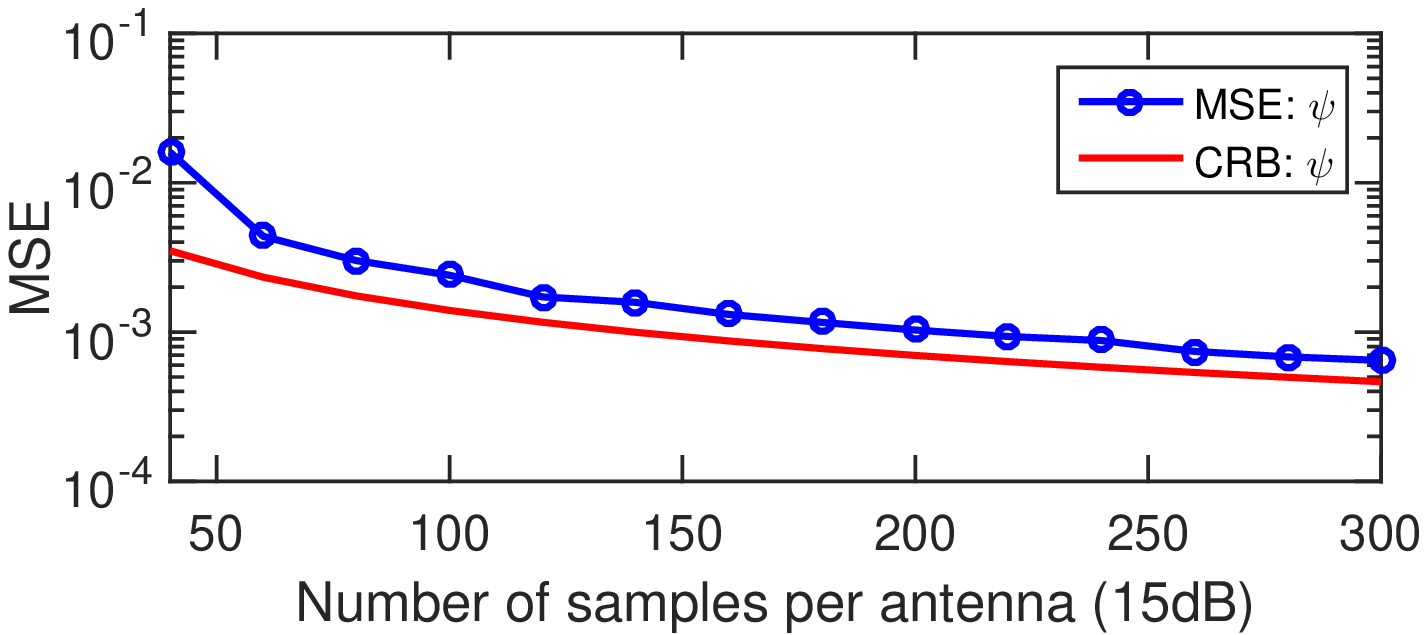}} \\
\subfigure[]
{\includegraphics[width=3.5in]{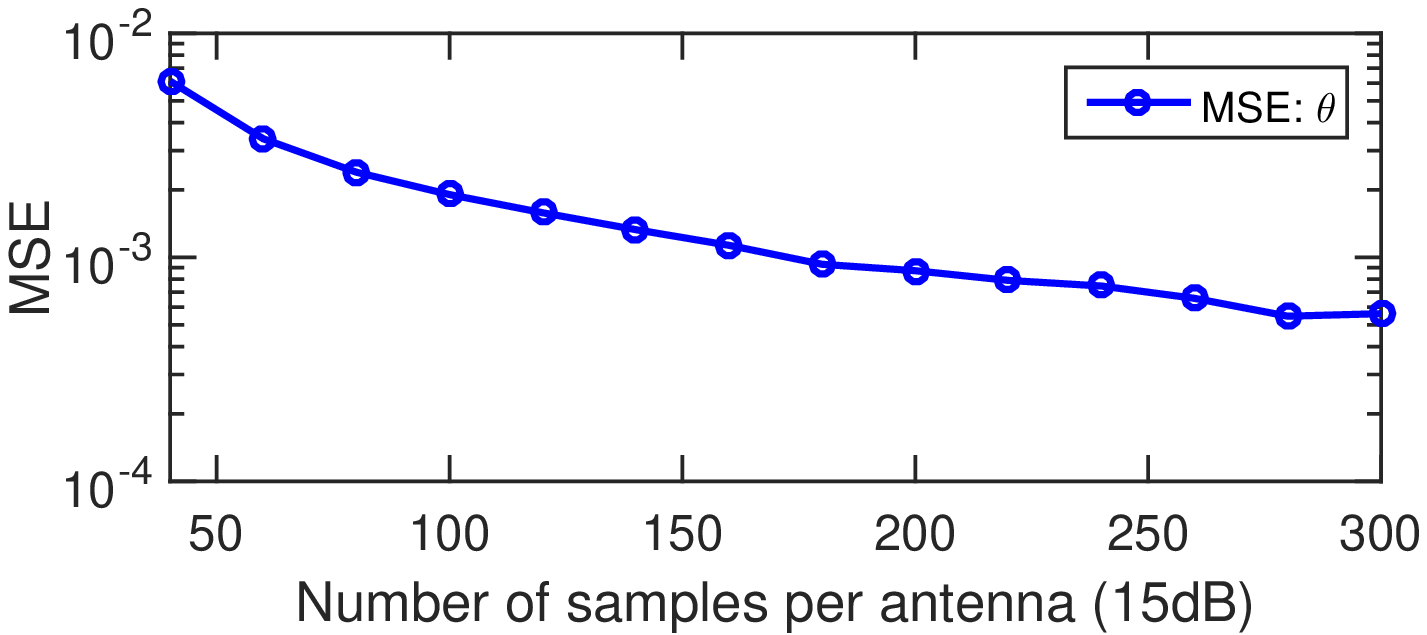}} \\
\subfigure[] {\includegraphics[width=3.5in]{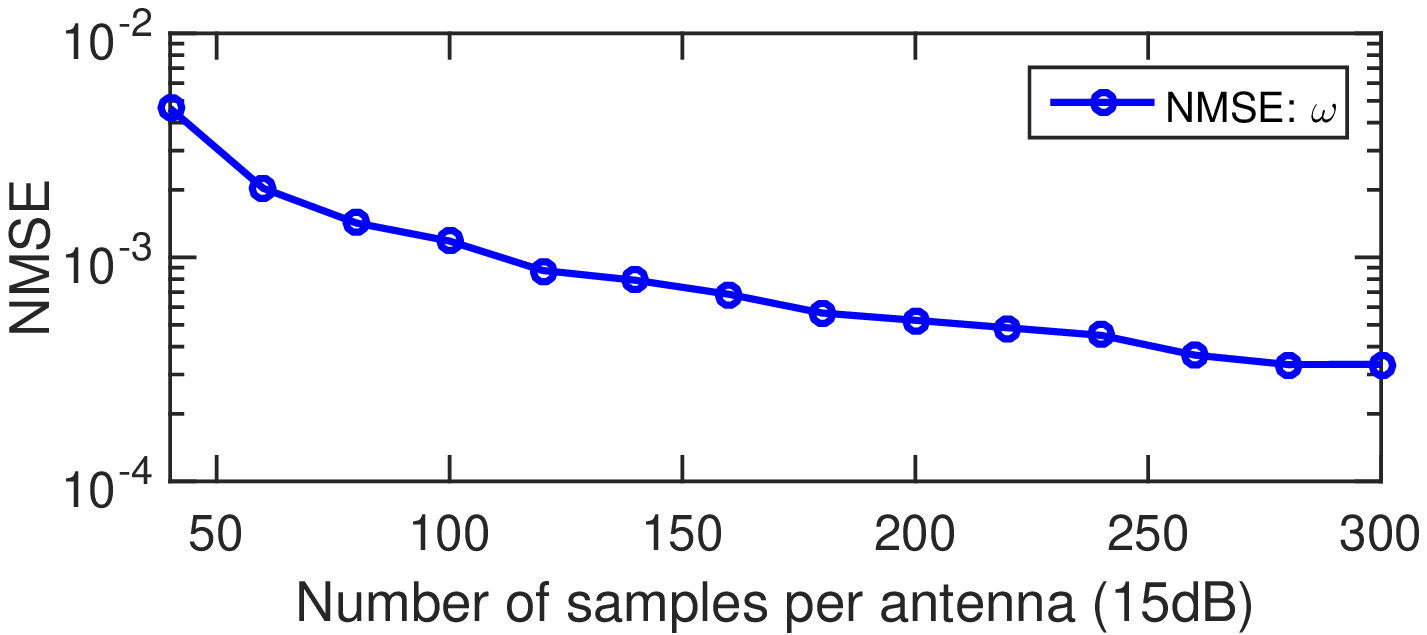}}
  \caption{MSEs and NMSE vs. the number of samples per antenna, where $N=8$ and $\text{SNR}=15$dB.}
   \label{fig:MSE-15dB}
\end{figure}

To better evaluate the performance of our proposed method, we
calculate the mean square errors (MSEs) for the following sets of
parameters
\begin{align}
\text{MSE}(\psi)&=\sum_{k=1}^{K} |\psi_k-\hat{\psi}_k|^2 \nonumber \\
\text{MSE}(\xi)&=\sum_{k=1}^{K} |\xi_k-\hat{\xi}_k|^2 \nonumber \\
\text{MSE}(\theta)&=\sum_{k=1}^{K} |\theta_k-\hat{\theta}_k|^2
\nonumber
\end{align}
Recalling that in our analysis, instead of concerning
$\{\theta_k,\omega_k\}$, we define two new parameters
$\xi_k\triangleq\omega_k\tau_k$ and $\psi_k\triangleq\omega_k/c$
and derive the CRB for $\{\xi_k,\psi_k\}$ in order to avoid the
numerical instability issue. The MSEs of the sets of parameters
$\{\xi_k,\psi_k\}$ are also included to compare with their
associated CRB results. The estimation accuracy of the carrier
frequencies is quantified by the normalized mean square error
(NMSE) defined as
\begin{align}
\text{NMSE}(\omega)=\sum_{k=1}^{K}
\frac{|\omega_k-\hat{\omega}_k|^2}{|\omega_k|^2} \nonumber
\end{align}
In this example, we set the number of sources $K=2$. The
parameters associated with these two sources are given as:
$f_1=152\text{MHz}$, $f_2=437\text{MHz}$, $B_1=126\text{KHz}$,
$B_2=63\text{KHz}$, $\theta_1=\pi/4$, and $\theta_2=\pi/3$. The
sampling rate is set to $f_s=1.26\text{MHz}$. Fig.
\ref{fig:MSE-5dB} and Fig. \ref{fig:MSE-15dB} depict the
MSEs/NMSEs of respective sets of parameters vs. the number of
samples $N_s$, where we set $\text{SNR}=5\text{dB}$ and
$\text{SNR}=15\text{dB}$, respectively. MSE/NMSE results are
averaged over 1000 independent runs, where the baseband complex
source signals are randomly generated for each run. We see that
our proposed method can achieve an estimation accuracy close to
the CRBs by using only a small number of data samples, e.g.
$N_s=200$. In Fig. \ref{fig:MSE-SNR}, we plot the MSEs/NMSEs of
different sets of parameters as a function of the SNR, where
$N_s=300$ data samples are used. We see that under a moderately
high SNR, say, $\text{SNR}=15\text{dB}$, our proposed method
attains an accurate estimate of the DoAs/carrier frequencies with
the MSE (NMSE) as low as $10^{-4}$.

\begin{figure}[!t]
 \centering
\subfigure[]
{\includegraphics[width=3.5in]{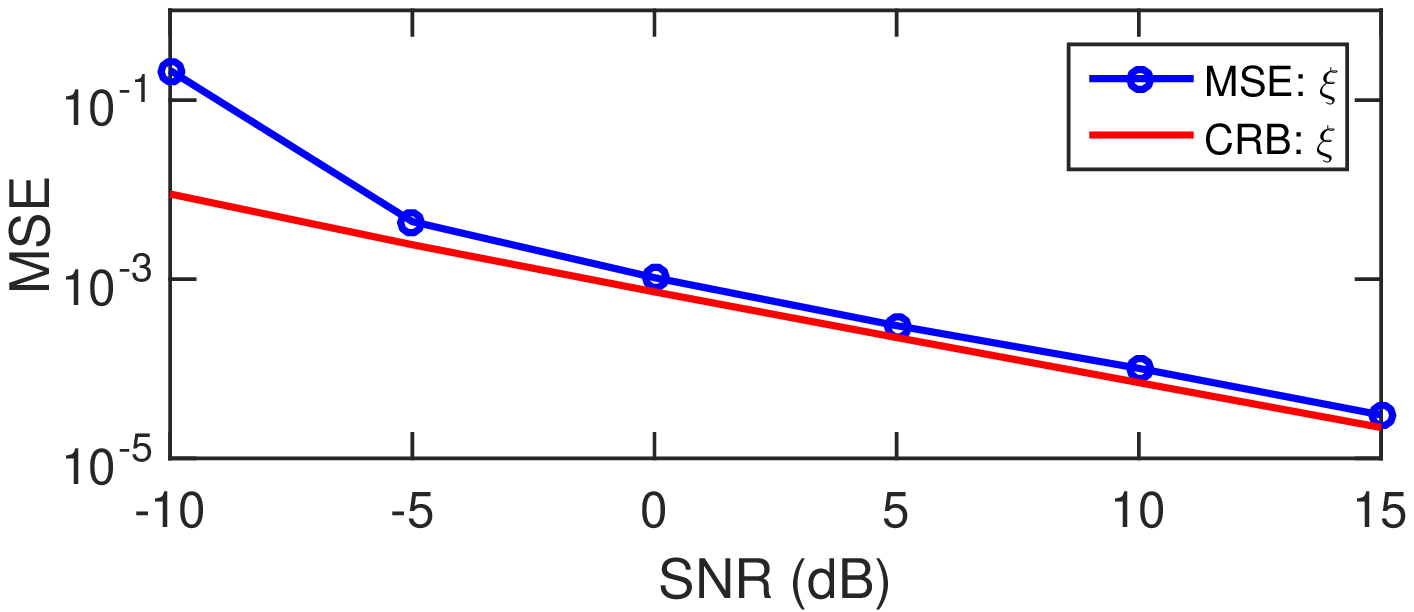}} \\
\subfigure[]
{\includegraphics[width=3.5in]{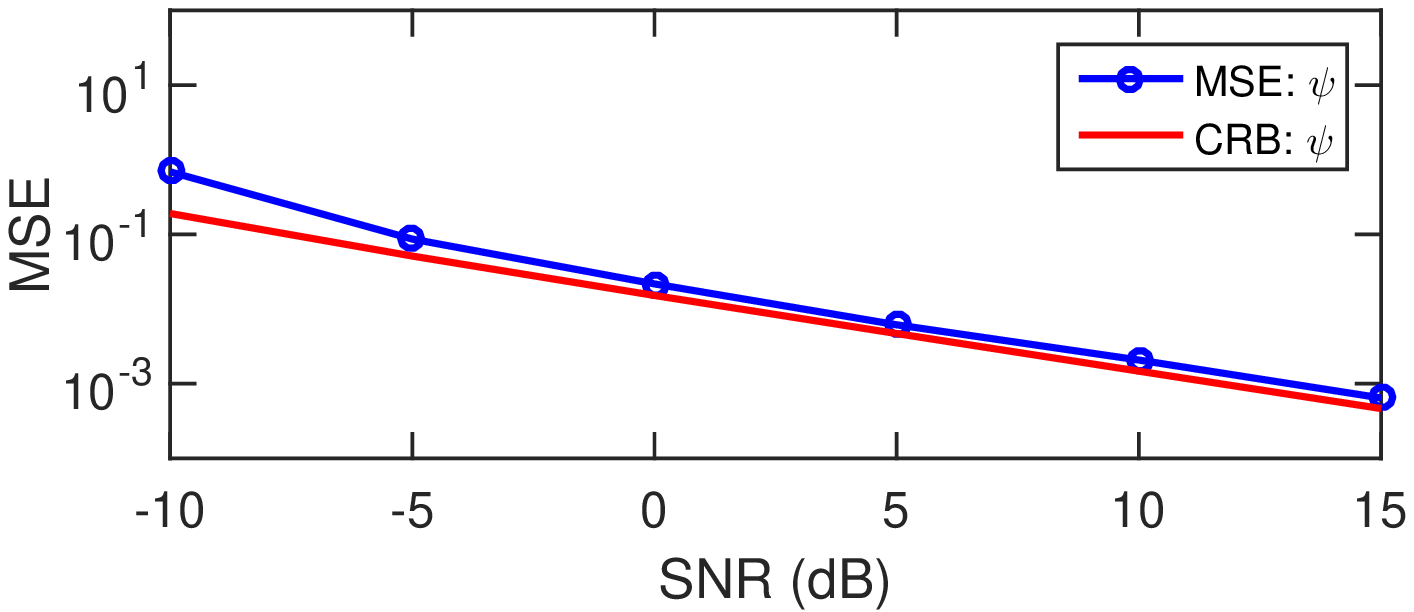}} \\
\subfigure[]
{\includegraphics[width=3.5in]{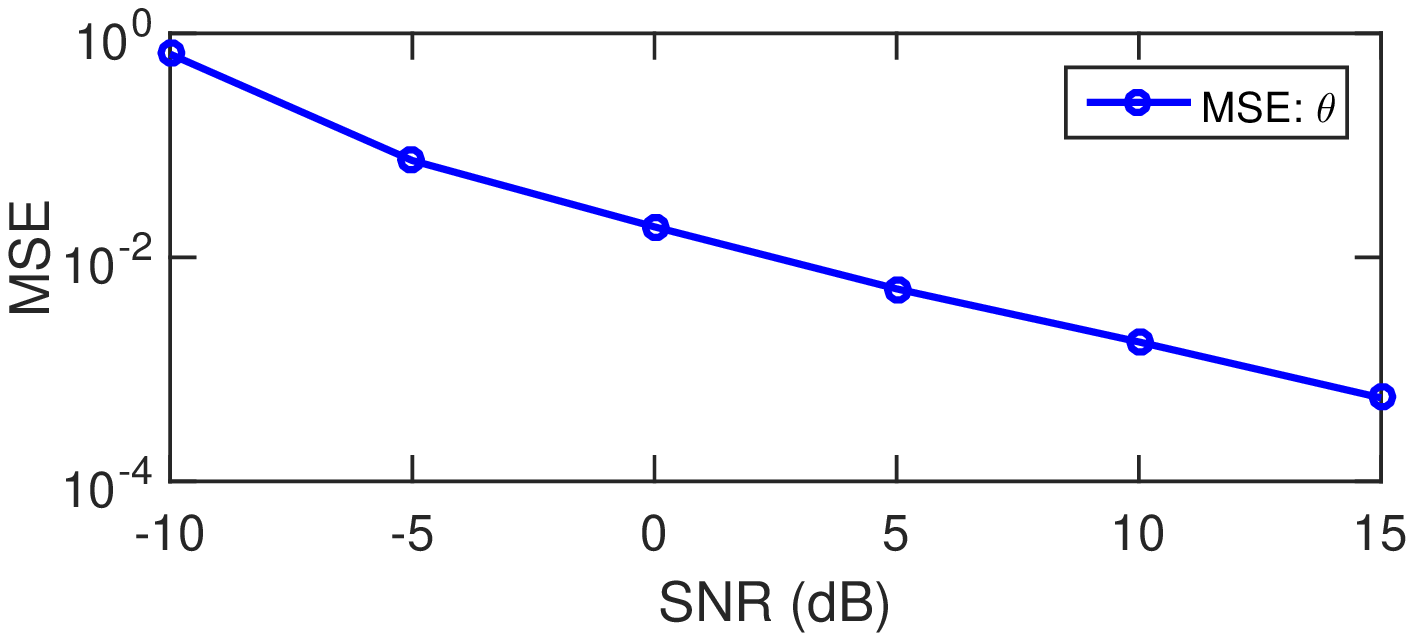}} \\
\subfigure[] {\includegraphics[width=3.5in]{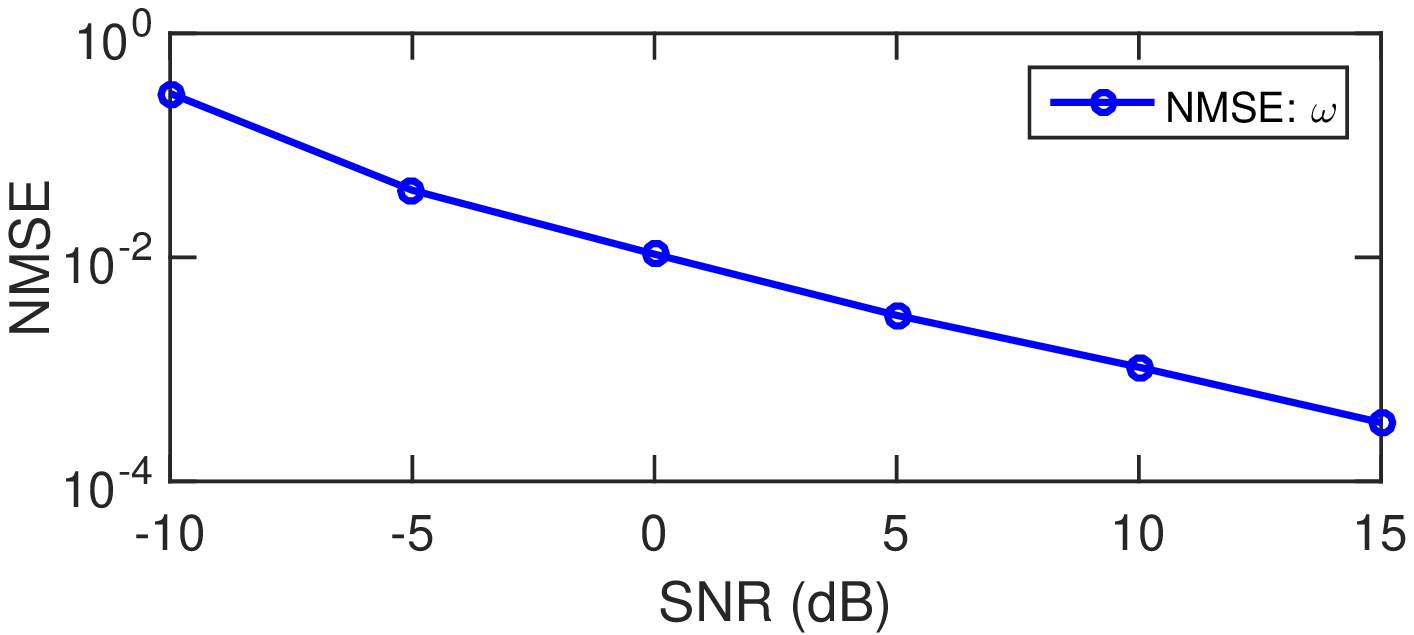}}
  \caption{MSEs and NMSE vs. SNR (dB), where $N=8$ and $N_s=300$.}
   \label{fig:MSE-SNR}
\end{figure}


\section{Conclusions} \label{sec:conclusion}
We considered the problem of joint wideband spectrum sensing and
DoA estimation in this paper. To overcome the sampling rate
bottleneck, we proposed a phased-array based sub-Nyquist sampling
architecture (termed as PASSAT) that is simpler in structure and
easier for implementation as compared with existing sub-Nyquist
receiver architectures. Based on the proposed receiver
architecture, we developed a CP decomposition-based method for
joint DoA, carrier frequency, and power spectrum estimation. The
conditions for exact recovery of the parameters and the power
spectrum were analyzed. Our analysis suggests that the perfect
recovery condition for our proposed method is mild: to recover the
power spectrum of the wide frequency band, we only need the
sampling rate to be greater than the bandwidth of the narrowband
source signal which has the largest bandwidth among all sources.
In addition, even for the case where sources have partial spectral
overlap, our proposed method is still able to extract the DoA,
carrier frequency, and the power spectrum associated with each
source signal. CRB analysis for our estimation problem was also
carried out. Simulation results show that our proposed method,
with only a small number of data samples, can achieve an
estimation accuracy close to the associated CRBs.

\bibliography{newbib}

\begin{thebibliography}{10}
\providecommand{\url}[1]{#1}
\csname url@rmstyle\endcsname
\providecommand{\newblock}{\relax}
\providecommand{\bibinfo}[2]{#2}
\providecommand\BIBentrySTDinterwordspacing{\spaceskip=0pt\relax}
\providecommand\BIBentryALTinterwordstretchfactor{4}
\providecommand\BIBentryALTinterwordspacing{\spaceskip=\fontdimen2\font plus
\BIBentryALTinterwordstretchfactor\fontdimen3\font minus
  \fontdimen4\font\relax}
\providecommand\BIBforeignlanguage[2]{{%
\expandafter\ifx\csname l@#1\endcsname\relax
\typeout{** WARNING: IEEEtran.bst: No hyphenation pattern has been}%
\typeout{** loaded for the language `#1'. Using the pattern for}%
\typeout{** the default language instead.}%
\else
\language=\csname l@#1\endcsname
\fi
#2}}

\bibitem{AxellLeus12}
E.~Axell, G.~Leus, E.~G. Larsson, and H.~V. Poor, ``Spectrum sensing for
  cognitive radio: State-of-the-art and recent advances,'' \emph{IEEE Signal
  Processing Magazine}, vol.~29, no.~3, pp. 101--116, May 2012.

\bibitem{SunNallanathan13}
H.~Sun, A.~Nallanathan, C.-X. Wang, and Y.~Chen, ``Wideband spectrum sensing
  for cognitive radio networks: a survey,'' \emph{IEEE Wireless
  Communications}, vol.~20, no.~2, pp. 74--81, Apr. 2013.

\bibitem{MishaliEldar09}
M.~Mishali and Y.~C. Eldar, ``Blind multiband signal reconstruction: Compressed
  sensing for analog signals,'' \emph{IEEE Trans. Signal Processing}, vol.~57,
  no.~3, pp. 993--1009, Mar. 2009.

\bibitem{MishaliEldar10}
------, ``From theory to practice: Sub-{Nyquist} sampling of sparse wideband
  analog signals,'' \emph{IEEE J. Sel. Topics Signal Process.}, vol.~4, no.~2,
  pp. 375--391, Apr. 2010.

\bibitem{MishaliEldar11}
M.~Mishali, Y.~C. Eldar, O.~Dounaevsky, and E.~Shoshan, ``Xampling: Analog to
  digital at sub-{Nyquist} rates,'' \emph{IET Circuits, Devices and Systems},
  vol.~5, no.~1, pp. 8--20, Jan. 2011.

\bibitem{WakinBecker12}
M.~Wakin, S.~Becker, E.~Nakamura, M.~Grant, E.~Sovero, D.~Ching, J.~Yoo,
  J.~Romberg, A.~Emami-Neyestanak, and E.~Cand\`{e}s, ``A nonuniform sampler
  for wideband spectrally-sparse environments,'' \emph{IEEE Journal on Emerging
  and Selected Topics in Circuits and Systems}, vol.~2, no.~3, pp. 516--529,
  Sept. 2012.

\bibitem{CandesRomberg06}
E.~Cand\'{e}s, J.~Romberg, and T.~Tao, ``Robust uncertainty principles: exact
  signal reconstruction from highly incomplete frequency information,''
  \emph{IEEE Trans. Information Theory}, vol.~52, no.~2, pp. 489--509, Feb.
  2006.

\bibitem{Donoho06}
D.~L. Donoho, ``Compressive sensing,'' \emph{IEEE Trans. Information Theory},
  vol.~52, no.~4, pp. 1289--1306, Apr. 2006.

\bibitem{ArianandaLeus12}
D.~D. Ariananda and G.~Leus, ``Compressive wideband power spectrum
  estimation,'' \emph{IEEE Trans. Signal Processing}, vol.~60, no.~9, pp.
  4775--4789, Sept. 2012.

\bibitem{YenTsai13}
C.-P. Yen, Y.~Tsai, and X.~Wang, ``Wideband spectrum sensing based on
  sub-{Nyquist} sampling,'' \emph{IEEE Trans. Signal Processing}, vol.~61,
  no.~12, pp. 3028--3040, June 2013.

\bibitem{CohenEldar14}
D.~Cohen and Y.~C. Eldar, ``Sub-{Nyquist} sampling for power spectrum sensing
  in cognitive radios: A unified approach,'' \emph{IEEE Trans. Signal
  Processing}, vol.~62, no.~15, pp. 3897--3910, Aug. 2014.

\bibitem{ZoltowskiMathews94}
M.~D. Zoltowski and C.~P. Mathews, ``Real-time frequency and {2-D} angle
  estimation with sub-{Nyquist} spatio-temporal sampling,'' \emph{IEEE Trans.
  Signal Processing}, vol.~42, no.~10, pp. 2781--2794, Oct. 1994.

\bibitem{SteinYair15}
S.~Stein, O.~Yair, D.~Cohen, and Y.~C. Eldar, ``Joint spectrum sensing and
  direction of arrival recovery from sub-{Nyquist} samples,'' in \emph{IEEE
  International Workshop on Signal Processing Advances in Wireless
  Communications}, Stockholm, Sweden, 2015, pp. 331--335.

\bibitem{LemmaVeen98}
A.~N. Lemma, A.-J. van~der Veen, and E.~F. Deprettere, ``Joint angle-frequency
  estimation using multi-resolution {ESPRIT},'' in \emph{IEEE International
  Conference on Acoustics, Speech, and Signal Processing}, Seattle, Washington,
  USA, 1998, pp. 1957--1960.

\bibitem{LemmaVeen03}
------, ``Analysis of joint angle-frequency estimation using {ESPRIT},''
  \emph{IEEE Trans. Signal Processing}, vol.~51, no.~5, pp. 1264--1283, May
  2003.

\bibitem{ArianandaLeus13}
D.~D. Ariananda and G.~Leus, ``Compressive joint angular-frequency power
  spectrum estimation,'' in \emph{IEEE European Signal Processing Conference},
  Marrakech, Morocco, 2013, pp. 1--5.

\bibitem{KumarRazul14}
A.~A. Kumar, S.~G. Razul, and C.-M.~S. See, ``An efficient sub-{Nyquist}
  receiver architecture for spectrum blind reconstruction and direction of
  arrival estimation,'' in \emph{IEEE International Conference on Acoustics,
  Speech, and Signal Processing}, Florence, Italy, 2014, pp. 6781--6785.

\bibitem{KumarRazul15}
------, ``Spectrum blind reconstruction and direction of arrival estimation at
  sub-{Nyquist} sampling rates with uniform linear array,'' in \emph{IEEE
  International Conference on Digital Signal Processing}, Singapore, 2015, pp.
  670--674.

\bibitem{IoushuaYair17}
S.~S. Ioushua, O.~Yair, D.~Cohen, and Y.~C. Eldar, ``{CaSCADE}: Compressed
  carrier and {DOA} estimation,'' \emph{IEEE Trans. Signal Processing},
  vol.~65, no.~10, pp. 2645--2658, May 2017.

\bibitem{KoldaBader09}
T.~G. Kolda and B.~W. Bader, ``Tensor decompositions and applications,''
  \emph{SIAM review}, vol.~51, no.~3, pp. 455--500, 2009.

\bibitem{LavrenkoRomer16}
A.~Lavrenko, F.~R{\"o}mer, S.~Stein, D.~Cohen, G.~D. Galdo, R.~S. Thom{\"a},
  and Y.~C. Eldar, ``Spatially resolved sub-{Nyquist} sensing of multiband
  signals with arbitrary antenna arrays,'' in \emph{IEEE International Workshop
  on Signal Processing Advances in Wireless Communications}, Edinburgh, UK,
  2016, pp. 1--5.

\bibitem{BazerqueMateos13}
J.~A. Bazerque, G.~Mateos, and G.~B. Giannakis, ``Rank regularization and
  bayesian inference for tensor completion and extrapolation,'' \emph{IEEE
  Trans. Signal Processing}, vol.~61, no.~22, pp. 5689--5703, November 2013.

\bibitem{RaiWang14}
P.~Rai, Y.~Wang, S.~Guo, G.~Chen, D.~Dunson, and L.~Carin, ``Scalable
  {B}ayesian low-rank decomposition of incomplete multiway tensors,'' in
  \emph{Proc. of the 31st Inter. Conf. on Mach. Learning (ICML-14)}, vol.~32,
  Beijing, China, 2014, pp. 1800--1808.

\bibitem{ZhaoZhang15}
Q.~Zhao, L.~Zhang, and A.~Cichocki, ``{B}ayesian {CP} factorization of
  incomplete tensors with automatic rank determination,'' \emph{IEEE Trans.
  Pattern Anal. Mach. Intell.}, vol.~37, no.~9, pp. 1751--1763, Sept. 2015.

\bibitem{Kruskal77}
J.~B. Kruskal, ``Three-way arrays: rank and uniqueness of trilinear
  decompositions, with application to arithmetic complexity and statistics,''
  \emph{Linear Algebra and its Applications}, vol.~18, no.~2, pp. 95--138,
  1977.

\bibitem{StegemanSidiropoulos07}
A.~Stegeman and N.~D. Sidiropoulos, ``On {Kruskal}'s uniqueness condition for
  the {Candecomp}/{Parafac} decomposition,'' \emph{Linear Algebra and its
  Applications}, vol. 420, no. 2-3, pp. 540--552, Jan. 2007.

\bibitem{Harshman72}
R.~A. Harshman, ``Determination and proof of minimum uniqueness conditions for
  {PARAFAC1},'' \emph{UCLA Working Papers in Phonetics}, no.~22, pp. 111--117,
  1972.

\bibitem{Kay93}
S.~M. Kay, \emph{Fundamentals of Statistical Signal Processing: Estimation
  Theory}.\hskip 1em plus 0.5em minus 0.4em\relax Upper Saddle River, NJ:
  Prentice Hall, 1993.

\bibitem{StoicaNehorai90}
P.~Stoica and A.~Nehorai, ``Performance study of conditional and unconditional
  direction-of-arrival estimation,'' \emph{IEEE Trans. Acoust., Speech, Signal
  Processing}, vol.~38, no.~10, pp. 1783--1795, Oct. 1990.

\bibitem{StoicaLarsson01}
P.~Stoica, E.~G. Larsson, and A.~B. Gershman, ``The stochastic {CRB} for array
  processing: a textbook derivation,'' \emph{IEEE Signal Processing Lett.},
  vol.~8, no.~5, pp. 148--150, May 2001.

\end{thebibliography}
\bibliographystyle{IEEEtran}

\end{document}